\newtheorem{thm}{Theorem}[section]
\newtheorem{prop}[thm]{Proposition}
\newtheorem{coro}[thm]{Corollary}
\newtheorem{lemma}[thm]{Lemma}
\newtheorem{remark}{Remark}
\newenvironment{proof}{\hspace{0ex}\textsc{Proof}.\hspace{1ex}}{\hfill$\Box$\newline}
\DeclareMathOperator{\dd}{\mathrm{d\!}}
\DeclareMathOperator{\BE}{\mathbf{E}}
\DeclareMathOperator{\BP}{\mathbf{P}}
\DeclareMathOperator{\BX}{Z}
\DeclareMathOperator{\R}{\mathbb{R}}
\begin{document}
\title{An Optimal Consumption-Investment Model with Constraint on Consumption
}
\author{ Zuo Quan Xu\footnote{Department of Applied
Mathematics, Hong Kong Polytechnic University, Hong Kong. This author
acknowledges financial supports from Hong Kong Early Career Scheme (No. 533112), Hong Kong General Research Fund (No. 529711) and Hong Kong Polytechnic University. Email: \url{maxu@polyu.edu.hk}.
 }\ \ and Fahuai
Yi\footnote{School of Mathematical Sciences, South China Normal University, Guangzhou, China.
 The project is supported by NNSF of China(No.11271143 and No.11371155) and University Special Research Fund for Ph.D. Program of China (20124407110001 and 20114407120008). Email: \url{fhyi@scnu.edu.cn}.
} }
\date{November 22, 2013}
\maketitle
\begin{abstract}
A continuous-time consumption-investment model with constraint is considered for a small investor whose decisions are the consumption rate and the allocation of wealth to a risk-free and a risky asset with logarithmic Brownian motion fluctuations. The consumption rate is subject to an upper bound constraint which linearly depends on the investor's wealth and bankruptcy is prohibited. The investor's objective is to maximize total expected discounted utility of consumption over an infinite trading horizon. It is shown that the value function is (second order) smooth everywhere but a unique possibility of (known) exception point and the optimal consumption-investment strategy is provided in a closed feedback form of wealth, which in contrast to the existing work does not involve the value function. According to this model, an investor should take the same optimal investment strategy as in Merton's model regardless his financial situation.
By contrast, the optimal consumption strategy does depend on the investor's financial situation: he should use a similar consumption strategy as in Merton's model when he is in a bad situation, and consume as much as possible when he is in a good situation.
\\[3mm]
\noindent
\textbf{Keywords: } Optimal consumption-investment model, constrained viscosity solution, free boundary problem, stochastic control in finance, constraint consumption
\end{abstract}

\section{Introduction}
\noindent
The publication of the monumental 1952 article \emph{Portfolio Selection} and the 1959 book of the same title by Harry M. Markowitz (1952, 1959) heralded the beginning of modern finance. To develop a general theory of portfolio choice, Samuelson (1969) and Merton (1969, 1971) initiated the study of dynamic optimal consumption-investment problems. The problem concerning optimal consumption-investment decisions involves the decisions of an investor endowed with some initial wealth who seeks to maximize the expected (discounted) utility of consumption over time. The decisions (called \emph{consumption-investment strategy}) are the consumption rate and the allocation of wealth to risk-free and risky assets over time. According to Merton (1975), studying this type of problems is the natural starting point for the development of a theory of finance.
\par
Samuelson and Merton's pioneering papers prompted researchers to contribute a considerable volume of new work on the subject in various directions. The literature has extensively covered the optimal consumption-investment problems in the financial markets that are subject to constraints and market imperfections. For example, the book authored by Sethi (1997) summarized the research conducted by Sethi and his collaborators on the optimal consumption-investment problems under various constraints such as bankruptcy prohibited, subsistence consumption requirement, borrowing prohibited, and random coefficients market. Fleming and Zariphopoulou (1991) considered the optimal consumption-investment problem with borrowing constraints. Cvitani and Karatzas (1992, 1993) considered the scenario in which the investment strategy of an investor is restricted to take values in a given closed convex set. Zariphopoulou (1994) considered the problem under the constraint that the amount of money invested in a risky asset must not exceed an exogenous function of the wealth, and bankruptcy is prohibited at any time. Elie and Touzi (2008) considered the optimal consumption-investment problem with the constraint that the wealth process never falls below a fixed fraction of its running maximum. Davis and Norman (1990), Zariphopoulou (1992), Shreve and Soner (1994), Akian, Menaldi, and Sulem (1996), and Dai and Yi (2009) considered proportional transaction costs in the study of optimal consumption-investment problems. These optimal consumption-investment models focus on the constraints on the wealth process and the investment strategy.
\par
Bardhan (1994) considered the optimal consumption-investment problem with constraint on the consumption rate and the wealth. The constraint is that the investor must consume at a minimal (constant) rate throughout the investment period, which is known as the subsistence consumption requirement, and must maintain their wealth over a low boundary at all times. However, in financial practice, an upper boundary constraint on the consumption rate typically exists in addition to the subsistence consumption requirement. An example of such scenario is an investment firm with cash flow commitments that is subject to regulatory capital constraints. No study in the extant literature has considered an upper boundary constraint on the consumption rate in the theory of optimal consumption-investment in intertemporal economies.
\par
Harry Markowitz, a Nobel laureate in economics, stated, ``It remains to be seen whether the introduction of realistic investor constraints is an impenetrable barrier to analysis, or a golden opportunity for someone with a novel approach; and whether progress in this direction will come first from discrete or from continuous-time models,'' in the foreword of the book by Sethi (1997). Research on the optimal consumption-investment problem that considers the upper constraint on the consumption rate is scant, although extensive research has been conducted on the problem involving other constraints, such as no bankruptcy or limits on the amount of money borrowed. Consequently, this research topic has not been sufficiently explored. This motivated us to investigate the optimal consumption-investment problems with constraint on consumption rate.
\par
In this paper, we consider a continuous-time consumption-investment model with an upper bound constraint on the consumption rate, which linearly depends on the amount of wealth of an investor at any time. The problem is considered in a standard Black-Scholes market with a risk-free and a risky asset over an infinite trading horizon. We make the usual assumption that shorting is allowed but bankruptcy is prohibited in the market. We will primarily use techniques derived from the theories of free boundary and viscosity solution in the field of differential equations to solve the problem (See e.g., Crandall and Lions (1983), Lions (1983), Fleming and Soner (1992), Dai, Xu and Zhou (2010), Dai and Xu (2011), Chen and Yi (2012)). As is well-known, the value function is the unique constrained viscosity solution of the associated Hamilton-Jacobi-Bellman (HJB) equation. Using this fact, we first prove that the viscosity solution of the equation is smooth everywhere but a unique possibility of (known) exception point. The detailed descriptions of an unconstrained and a constrained trading regions are then provided. Finally, we derive the optimal consumption-investment strategy in a closed feedback form of wealth. In contrast to the existing models, the optimal strategy explicitly given in our model does not involve the value function. The result shows that an investor should use a similar optimal consumption-investment strategy as in the unconstraint case when his financial situation is bad and should consume at the maximum possible rate when his situation is good.
\par
The paper is organized as follows. We formulate a continuous-time optimal consumption-investment model with constraint on the consumption rate in Section 2. A case without constraint is studies in Section 3. In Section 4, the associated Hamilton-Jacobi-Bellman equation to the problem is introduced and a case with homogeneous constraint is investigated. Using the techniques in the theory of viscosity solution, we show some properties of the value function of the problem in Section 5. The descriptions of an unconstrained and a constrained trading regions are provided in Section 6. Finally, we derive the optimal consumption-investment strategy in a closed feedback form of wealth in Section 7. We conclude the paper in Section 8.

\section{Programme Formulation}
\noindent
We consider a standard Black-Scholes financial market with two assets: a bond and a stock.
The price of the bond is driven by an ordinary differential equation (ODE)
\begin{align*}
 \dd P_t=rP_t\dd t,
\end{align*}
where $r$ is the risk-free interest rate. The price of the stock is driven by a stochastic differential equation (SDE):
\begin{align*}
 \dd S_t=\alpha S_t\dd t+\sigma S_t\dd W_t,
\end{align*}
where $\alpha$ is the mean return rate of the stock, $\sigma$ is the volatility of the stock, and
$W(\cdot)$ is a standard one-dimensional Brownian motion on a given complete probability space $(\Omega, \mathcal{F},\BP)$.
We denote by $ \{\mathcal{F}_t =\sigma(W_s,s\leqslant t),t>0\}$ the filtration generated by the Brownian motion.
The interest rate $r$, the mean rate of return $\alpha$, and the volatility $\sigma$ are assumed to be constant with $r>0$, $\sigma>0$, and $\mu:=\alpha-r>0$. There are no transaction fees or taxes and shorting is also allowed in the market.
\par
Let us consider a small investor in the market. The investor's trading will not affect the market prices of the two assets.
His trading strategy is self-financing meaning that there is no incoming or outgoing cash flow during the whole ivestment period. Then it is well-known that the wealth process of the investor is driven by an SDE:
\begin{align}\label{wealth}
\begin{cases}
\dd X_t=(r X_t+\pi_t\mu-c_t)\dd t+\pi_t\sigma \dd W_t,\\
\;\;X_0=x,
\end{cases}
\end{align}
where $x>0$ is the initial endowment of the investor, $\pi_t$ is the amount of money invested in the stock at time $t$, $c_t\geqslant 0$ is the consumption rate at time $t$. In this paper, we assume that no bankruptcy is allowed, that is
\begin{align}\label{constaint:nobankruptcy}
 X_t\geqslant 0, \quad \forall\; t>0,
\end{align}
 almost surely (a.s.). The target of the investor is to choose the best consumption-investment strategy $( c(\cdot), \pi(\cdot))$, which is subject to certain constraints specified below, to maximize the total expected (discounted) utility from consumption over an infinite trading horizon
\begin{align}\label{objective}
\textrm{maximize }\BE\left[\int_0^{\infty}e^{-\beta t}U(c_t)\dd t\right],
\end{align}
where $U: \R^+\mapsto \R^+$ is the utility function of the investor, which is strictly increasing, and $\beta>0$ is a constant discounting factor.
In this paper, we consider risk-verse investor only, this is equivalent to say $U(\cdot)$ is concave.
\par
The consumption-investment strategy $(c(\cdot), \pi(\cdot))$ is required to satisfy the following integrability constraint
\begin{align}\label{constaint:add1}
\BE\left[ \int_{0}^{T} e^{-\beta t} (\pi_t^2+c_t)\dd t\right]<\infty, \quad \forall\; T>0,
\end{align}
in which case, SDE \eqref{wealth} admits a unique solution $X(\cdot)$ satisfying
\begin{align*}
\BE\left[\int_{0}^{T} e^{-\beta t} |X_t| \dd t\right]<\infty, \quad \forall\; T>0.
\end{align*}
\par
If no other constraint on the consumption rate and investment strategy exists, problem \eqref{objective} becomes the classical Merton (1971)'s consumption-investment problem. However, in practice, constraint on the consumption rate always exists; for example, the consumption rate cannot be too low because an investor has basic needs, which are the minimal amount of resources necessary required for long-term physical well-being;
this is the so-called the subsistence consumption requirement. Another practical example is when the manager of a fund requests a fixed salary and a proportion of the managed wealth as a bonus. However, most of the wealth still belongs to the owner, and consequently, the manager cannot take excessive amounts from the total wealth.
These scenarios motivated us to consider an upper constraint on the consumption rate.
\par
In this paper, specifically, we assume that the consumption rate is upper bounded by a time-invariant linear function of wealth $X_t$ at any time:
\begin{align}\label{constaint:add3}
0 \leqslant c_t\leqslant k X_t+ \ell,\quad t\geqslant 0,
\end{align}
where $k$ and $\ell $ are nonnegative constants, at least one of which is positive.
\par
Denote the value function by
\begin{align}\label{objective0}
V(x):=\sup_{(c(\cdot),\pi(\cdot))}\BE\left[\int_0^{\infty}e^{-\beta t}U(c_t)\dd t\right].
\end{align}
where the consumption-investment strategy $(c(\cdot), \pi(\cdot))$ is subject to the constraints \eqref{constaint:nobankruptcy}, \eqref{constaint:add1} and \eqref{constaint:add3}.
\par
Same as Merton (1971)'s model, we focus on the constant relative risk aversion (CRRA) type utility function
\begin{align}\label{utilityfunction}
U(x)=\frac{x^p}{p},\quad x \geqslant 0
\end{align}
for some constant $0<p<1$. It is well-known that logarithmic utility function can be treated as a limit case of CRRA type utility function as $\log (x)=\lim\limits_{p\to 0}\frac{x^p-1}{p}$, so the results of this paper can be extended to cover logarithmic utility function.

\section{Merton' Model: A Case without Constraint}
\noindent
We first recall the well-known result of Merton (1971) for the scenario without constraint.
Define
\begin{align*}
\theta:=&\frac{\mu^2}{2\sigma^2(1-p)}>0,
 \end{align*}
 and
\begin{align*}
\kappa:=&\frac{\beta-p( \theta+r)}{1-p}.
 \end{align*}
\begin{thm}
 If $\kappa>0$ and there is no constraint on the consumption rate, i.e., $k=+\infty$ or $\ell=+\infty$, then the optimal consumption-investment strategy for problem \eqref{objective0} is given by
\begin{align*}
 (c_t, \pi_t)=&\left(\kappa X_t, \frac{ \mu }{\sigma^2(1-p)} X_t\right),\quad t\geqslant 0,
\end{align*}
and the optimal value is
\begin{align}\label{unconstraintV}
 V^{\infty}(x)=\frac{1}{p}{\kappa}^{p-1}x^p.
\end{align}
\end{thm}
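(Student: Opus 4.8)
The plan is to solve the Hamilton--Jacobi--Bellman (HJB) equation of the unconstrained problem explicitly, read off a candidate optimal pair in feedback form, and then close the argument with an infinite-horizon verification theorem. First I would exploit the scaling structure of the problem: if $(c(\cdot),\pi(\cdot))$ is admissible for initial wealth $x$ then $(\lambda c(\cdot),\lambda\pi(\cdot))$ is admissible for initial wealth $\lambda x$, the resulting wealth process is $\lambda X(\cdot)$, and the objective in \eqref{objective0} is multiplied by $\lambda^p$; hence $V^{\infty}(\lambda x)=\lambda^p V^{\infty}(x)$ and therefore $V^{\infty}(x)=A x^p$ for a single constant $A\geqslant0$ still to be found. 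Thus the whole problem reduces to pinning down one number.

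Next I would write the HJB equation
\begin{align*}
\beta V^{\infty}(x)=\max_{c\geqslant 0,\ \pi\in\R}\Big\{U(c)+(rx+\pi\mu-c)(V^{\infty})'(x)+\tfrac12\pi^2\sigma^2(V^{\infty})''(x)\Big\}
\end{align*}
and perform the pointwise maximization. Since $0<p<1$ and $A>0$ force $(V^{\infty})''<0$, the inner expression is a genuinely concave quadratic in $\pi$ with a unique maximizer, and it increases then decreases in $c$ with an interior maximizer; the first order conditions give $\pi^{*}(x)=-\frac{\mu (V^{\infty})'(x)}{\sigma^2(V^{\infty})''(x)}$ and $c^{*}(x)=(U')^{-1}\big((V^{\infty})'(x)\big)$, which for $V^{\infty}(x)=Ax^p$ reduce to $\pi^{*}(x)=\tfrac{\mu}{\sigma^2(1-p)}x$ and $c^{*}(x)=(Ap)^{1/(p-1)}x$. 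Substituting these two expressions back into the HJB equation makes every term a multiple of $x^p$ and leaves a single algebraic equation for $A$, which is exactly the defining relation $\beta=p(\theta+r)+(1-p)\kappa$; solving it yields $Ap=\kappa^{p-1}$, i.e. $A=\tfrac1p\kappa^{p-1}$ and $c^{*}(x)=\kappa x$. The hypothesis $\kappa>0$ enters precisely here: it guarantees $A>0$ (so $V^{\infty}\geqslant0$) and, as seen below, both finiteness of the value and a transversality condition.

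The last step is the verification. For the upper bound $V\leqslant V^{\infty}$, for an arbitrary admissible $(c(\cdot),\pi(\cdot))$ I would apply It\^o's formula to $e^{-\beta t}V^{\infty}(X_t)$, use that $V^{\infty}$ solves the HJB equation (hence $(rX_t+\pi_t\mu-c_t)(V^{\infty})'(X_t)+\tfrac12\pi_t^2\sigma^2(V^{\infty})''(X_t)-\beta V^{\infty}(X_t)\leqslant -U(c_t)$ pointwise), localize the stochastic integral by stopping times, take expectations, and let first the localizing times and then $T\to\infty$, invoking $V^{\infty}\geqslant0$ together with monotone convergence to obtain $V^{\infty}(x)\geqslant\BE\big[\int_0^{\infty}e^{-\beta t}U(c_t)\,\dd t\big]$. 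For the matching lower bound, I would first check that the feedback pair $(c_t,\pi_t)=(\kappa X_t,\tfrac{\mu}{\sigma^2(1-p)}X_t)$ is admissible: the closed-loop wealth equation is linear, hence has a unique strong solution that is a geometric Brownian motion, which is strictly positive (so \eqref{constaint:nobankruptcy} holds) and has all moments finite (so \eqref{constaint:add1} holds). Along this pair the HJB inequality becomes an equality, so the same It\^o computation gives $V^{\infty}(x)=\BE\big[\int_0^T e^{-\beta t}U(c_t)\,\dd t\big]+\BE\big[e^{-\beta T}V^{\infty}(X_T)\big]$, and an explicit moment computation for the geometric Brownian motion shows $\BE\big[e^{-\beta T}V^{\infty}(X_T)\big]=\tfrac1p\kappa^{p-1}x^p e^{-\kappa T}\to0$ as $T\to\infty$ since $\kappa>0$; letting $T\to\infty$ yields $V(x)\geqslant V^{\infty}(x)$ and the optimality of the stated strategy.

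The conceptual content here is classical (it is essentially Merton (1971)), so the main obstacle is the infinite-horizon bookkeeping in the verification step: justifying the localization of the stochastic integral and the interchange of limit and expectation, and establishing the transversality condition $\BE[e^{-\beta T}V^{\infty}(X_T)]\to0$. Both ultimately hinge on $\kappa>0$ and on the finite moments of the geometric Brownian motion generated by the candidate controls, so once those are in hand the remainder is routine.
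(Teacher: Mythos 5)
Your proposal is correct: the scaling argument pins down the form $Ax^p$, the pointwise maximization and the algebraic identity $\beta=p(\theta+r)+(1-p)\kappa$ correctly yield $A=\tfrac1p\kappa^{p-1}$ and $c^*(x)=\kappa x$, and the verification step (including the transversality computation $\BE[e^{-\beta T}V^{\infty}(X_T)]=\tfrac1p\kappa^{p-1}x^pe^{-\kappa T}\to0$) is the standard and complete way to close the argument. The paper itself offers no proof here --- it simply recalls this as Merton's classical result --- so your write-up is exactly the canonical dynamic-programming-plus-verification proof that the citation stands in for.
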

\par
The optimal value $V^{\infty}(x)=\frac{1}{p}{\kappa}^{p-1}x^p$ will serve as an upper bound for the optimal value in scenarios with constraint.

\section{Hamilton-Jacobi-Bellman Equation}
\noindent
We adopt the viscosity solution approach in differential equations to solve problem \eqref{objective0}.
Let us start with proving some basic properties of the value function.
\begin{prop}\label{propertiesbasicV}
If $\kappa>0$, then the value function $V(\cdot)$ of problem \eqref{objective0} satisfies
 \begin{align}\label{upperboundV}
 V(x)\leqslant \frac{1}{p}{\kappa}^{p-1}x^p, \quad x>0.
 \end{align}
Moreover, $V(\cdot)$ is continuous, increasing, and concave on $[0,+\infty)$ with $V(0)=0$.
\end{prop}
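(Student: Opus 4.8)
The plan is to establish the four assertions in order: the explicit upper bound, then concavity, then continuity, and finally $V(0)=0$ together with monotonicity, using only elementary comparison arguments and the definition \eqref{objective0}.

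First I would prove the upper bound \eqref{upperboundV}. The key observation is that enlarging the feasible set can only increase the supremum, so $V(x)\leqslant V^{\infty}(x)$ where $V^{\infty}$ is the value function of the unconstrained Merton problem: any strategy admissible for \eqref{objective0} (i.e.\ satisfying \eqref{constaint:nobankruptcy}, \eqref{constaint:add1} and \eqref{constaint:add3}) is in particular admissible for the constrained-free problem, for which Theorem 3.1 gives the value $\frac1p\kappa^{p-1}x^p$. Hence $V(x)\leqslant \frac1p\kappa^{p-1}x^p$ for all $x>0$; in particular $V$ is finite.

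Next I would prove concavity. This is the standard convexity-of-the-feasible-set argument: fix $x_1,x_2>0$ and $\lambda\in(0,1)$, and let $(c^i(\cdot),\pi^i(\cdot))$ be near-optimal for initial wealth $x_i$ with wealth process $X^i(\cdot)$. Set $x=\lambda x_1+(1-\lambda)x_2$, $c=\lambda c^1+(1-\lambda)c^2$, $\pi=\lambda\pi^1+(1-\lambda)\pi^2$. Because the wealth dynamics \eqref{wealth} are linear in $(x,\pi,c)$, the corresponding wealth process is $X=\lambda X^1+(1-\lambda)X^2$, which stays nonnegative, and the constraint \eqref{constaint:add3} is preserved since $kX+\ell=\lambda(kX^1+\ell)+(1-\lambda)(kX^2+\ell)\geqslant \lambda c^1+(1-\lambda)c^2=c$; integrability \eqref{constaint:add1} is likewise inherited. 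Then concavity of $U$ gives $U(c_t)\geqslant \lambda U(c^1_t)+(1-\lambda)U(c^2_t)$, and taking expectations, discounting, and then the supremum over near-optimal strategies yields $V(x)\geqslant \lambda V(x_1)+(1-\lambda)V(x_2)$. This gives concavity on $(0,+\infty)$.

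For monotonicity, I would note that $U\geqslant 0$ and that a larger initial wealth dominates: given $0<x_1<x_2$, one can run any admissible strategy for $x_1$ and invest the surplus $x_2-x_1$ in the bond while consuming its interest (or simply observe that the feasible wealth processes from $x_2$ pointwise dominate those from $x_1$ via the linear dynamics), so $V(x_1)\leqslant V(x_2)$. Also $V(0)=0$: starting from $x=0$, no bankruptcy forces $X_t\equiv 0$, hence by \eqref{constaint:add3} with $\ell$ fixed we cannot immediately conclude $c\equiv 0$ unless $\ell=0$; instead, from \eqref{constaint:add3} and $X_t\geqslant0$ one shows the wealth cannot become positive without an inflow, so in fact $X_t\equiv 0$ forces $\pi_t\equiv 0$ and then $c_t\equiv 0$ a.e., giving $V(0)=0$. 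Continuity on $(0,+\infty)$ is then automatic since a finite concave function on an open interval is continuous; continuity at $0$ follows by squeezing $0=V(0)\leqslant V(x)\leqslant \frac1p\kappa^{p-1}x^p\to 0$ as $x\downarrow0$, together with monotonicity. The main obstacle is the careful handling of the degenerate point $x=0$ — specifically arguing that the no-bankruptcy constraint \eqref{constaint:nobankruptcy} together with \eqref{constaint:add3} truly forces the trivial strategy, which requires a short comparison/localization argument on the linear SDE \eqref{wealth} rather than a one-line assertion.
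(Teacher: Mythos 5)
Your proposal is correct and follows essentially the same route as the paper: the upper bound by comparison with the unconstrained Merton value, concavity via convex combinations of admissible strategies under the linear dynamics, continuity from finiteness and concavity, and continuity at $0$ by squeezing with \eqref{upperboundV}. The only difference is that you flesh out the steps the paper leaves to the reader (concavity, monotonicity, and the claim $V(0)=0$), and in particular you rightly note that when $\ell>0$ the identity $V(0)=0$ needs the no-bankruptcy/SDE argument forcing $X\equiv 0$, $\pi\equiv 0$, $c\equiv 0$, rather than the constraint \eqref{constaint:add3} alone.
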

\begin{proof}
Both the set of admissible controls and the optimal value of problem \eqref{objective0} are increasing in $\ell$ and consequently, an upper bound of the optimal value is given by the scenario $\ell=+\infty$. So the inequality \eqref{upperboundV} follows from \eqref{unconstraintV}.
\par
If the initial endowment of problem \eqref{objective0} is 0, then the unique admissible consumption-investment strategy is $(c(\cdot), \pi(\cdot))\equiv (0,0)$, so $V(0)=0$ and consequently,
 $V(\cdot)$ is continuous at 0 from the right by \eqref{upperboundV}. By the definition of $V(\cdot)$, it is not hard to prove its the concavity and monotonicity. We leave the details to the interested readers. The continuity of $V(\cdot)$ on $(0,+\infty)$ follows from its finiteness and concavity.
\end{proof}
\par
With this proposition, using the theory of viscosity solution in differential equations (See Crandall and Lions (1983), Lions (1983), Fleming
and Soner (1992)), we can prove that
\begin{thm}\label{HJBequation}
If $\kappa>0$, then the value function $V(\cdot)$ of problem \eqref{objective0} is the unique viscosity solution of its associated HJB
equation
\begin{multline}
 \label{pde1}
\beta V(x)-\sup_{\pi}\left(\tfrac{1}{2}\sigma^2\pi^2 V_{xx}(x)+\pi\mu V_{x}(x)\Big)-\sup_{ 0 \leqslant c\leqslant k x+ \ell}\Big(U(c)-cV_{x}(x)\right)-rxV_{x}(x)\\
=\beta V(x)+\frac{\mu^2}{2\sigma^2}\frac{V_x^2(x)}{V_{xx}(x)}+(c(x) -rx)V_x(x)-\frac{c^p(x)}{p}=0,\quad x>0,
\end{multline}
 in the class of increasing concave functions on $[0,+\infty)$ with $V(0)=0$, where
 \begin{align*}
c(x):=\min\left\{(V_x(x))^{\frac{1}{p-1}}, k x+\ell \right\},\quad x>0.
\end{align*}
 \end{thm}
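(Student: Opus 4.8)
The plan is to show that $V$ solves the HJB equation \eqref{pde1} in the viscosity sense, and then invoke a comparison principle for this specific equation to conclude uniqueness within the stated class. The first and routine observation is the algebraic reformulation: for fixed $x>0$, the inner suprema in the first line of \eqref{pde1} can be computed explicitly. Since $V$ is concave and (as we expect) strictly concave with $V_{xx}(x)<0$ on the relevant region, the map $\pi\mapsto \tfrac12\sigma^2\pi^2 V_{xx}(x)+\pi\mu V_x(x)$ is a downward parabola whose maximizer is $\pi^*(x)=-\mu V_x(x)/(\sigma^2 V_{xx}(x))$, giving the value $-\tfrac{\mu^2}{2\sigma^2}V_x^2(x)/V_{xx}(x)$; and since $c\mapsto U(c)-cV_x(x)=c^p/p-cV_x(x)$ is concave, its maximizer over $[0,kx+\ell]$ is the projection of the unconstrained critical point $(V_x(x))^{1/(p-1)}$ onto the interval, i.e.\ exactly $c(x)=\min\{(V_x(x))^{1/(p-1)},kx+\ell\}$ (here one uses that $V_x>0$ so the critical point is positive, hence the lower truncation at $0$ is inactive). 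Substituting these back produces the second line of \eqref{pde1}. This step is purely formal and does not require $V$ to be differentiable — in the viscosity formulation one runs this computation on test functions.

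Next I would establish the dynamic programming principle (DPP) for the value function \eqref{objective0}: for any admissible strategy and any stopping time $\tau$,
\begin{align*}
V(x)=\sup_{(c,\pi)}\BE\left[\int_0^{\tau}e^{-\beta t}U(c_t)\,\dd t+e^{-\beta\tau}V(X_\tau)\right].
\end{align*}
Given Proposition \ref{propertiesbasicV} — which supplies continuity of $V$, the growth bound $V(x)\leqslant \tfrac1p\kappa^{p-1}x^p$, and $V(0)=0$ — the DPP holds by the now-standard arguments in Fleming and Soner (1992) (the integrability constraint \eqref{constaint:add1} guarantees the needed uniform integrability). From the DPP one derives, in the usual way, that $V$ is a viscosity supersolution (using constant-control perturbations) and a viscosity subsolution (arguing by contradiction from a strict local maximum of $V-\varphi$). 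Because bankruptcy is prohibited, $x=0$ is an absorbing boundary with $V(0)=0$, so the relevant notion is that of a \emph{constrained} viscosity solution on $[0,+\infty)$: a viscosity subsolution on the open half-line $(0,+\infty)$ and a viscosity supersolution on the closed half-line $[0,+\infty)$, the latter encoding the state constraint at the origin.

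The crux is uniqueness, i.e.\ a comparison principle: if $\underline V$ is a constrained viscosity subsolution and $\overline V$ a supersolution of \eqref{pde1}, both increasing, concave, nonnegative with value $0$ at the origin and with the polynomial growth bound, then $\underline V\leqslant \overline V$ on $[0,+\infty)$. The standard route is the doubling-of-variables technique with the penalization $\Phi_\varepsilon(x,y)=\underline V(x)-\overline V(y)-\tfrac{1}{2\varepsilon}(x-y)^2$, using the Crandall–Ishii lemma to control the second-order terms, plus a localization device near $x=0$ and near $x=\infty$ to handle the unbounded domain — this is where the growth estimate from Proposition \ref{propertiesbasicV} and the boundary value $V(0)=0$ do their work, and where the linear-in-wealth form of the consumption bound $kx+\ell$ keeps the Hamiltonian's dependence on $x$ Lipschitz enough for the estimates to close. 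The main obstacle is precisely this comparison argument: the Hamiltonian is singular (it features $V_x^2/V_{xx}$, which is only well behaved where $V_{xx}<0$), so one must either work with the $\sup$-form of the equation (first line of \eqref{pde1}), which is degenerate-elliptic and proper once rewritten as $\beta V - \sup_\pi(\cdots) - \sup_c(\cdots) - rxV_x = 0$, or restrict attention to the region of strict concavity; the concavity of the admissible solutions, already guaranteed by the class we work in, is what makes the $\sup$-form and the quotient-form agree and what tames the singularity. Once comparison is in hand, uniqueness in the stated class is immediate, and combined with the verification that $V$ itself is a constrained viscosity solution, the theorem follows. For these standard but technical steps I would cite Crandall and Lions (1983), Lions (1983), and Fleming and Soner (1992) rather than reproduce them in full.
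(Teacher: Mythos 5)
Your outline is the standard viscosity-solution argument (DPP, constrained sub/supersolution at the state boundary $x=0$, comparison via doubling of variables with the growth bound from Proposition \ref{propertiesbasicV}), which is precisely what the paper invokes by deferring to Zariphopoulou (1992, 1994) rather than writing out details. The proposal is correct and takes essentially the same approach as the paper.
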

\begin{proof}
Standard proof (See e.g., Zariphopoulou (1992, 1994)). We leave the details to the interested readers.
\end{proof}

\subsection{A Case with Homogeneous Constraint }
\noindent
We first consider the scenario with a homogeneous constraint on the consumption rate. The results will be useful in studying general scenarios in the following sections.
\begin{thm}\label{trivialcase}
If $k>0$, $ \ell=0$, and $\kappa>0$, then the optimal consumption-investment strategy for problem \eqref{objective0} is given by
\begin{align}\label{homocasec&pi2}
 (c_t, \pi_t)=&\left(\min\left\{ \kappa, k \right\}X_t, \frac{ \mu }{\sigma^2(1-p)} X_t\right), \quad t\geqslant 0,
\end{align}
and the optimal value is
\begin{align}\label{homocaseV}
 V(x)= \frac{\min\left\{\kappa, k \right\}^p}{p(\kappa(1-p)+\min\left\{ \kappa, k \right\}p)}x^p=
 \begin{cases}
 \frac{k^p}{p(\kappa(1-p)+kp)}x^p, &\quad k< \kappa;\\[4pt]
 \frac{1}{p } \kappa^{p-1}x^p, &\quad k\geqslant \kappa.
 \end{cases}
\end{align}
\end{thm}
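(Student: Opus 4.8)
The plan is to exploit the scaling structure of the homogeneous case. With $\ell=0$ the constraint \eqref{constaint:add3} reads $0\le c_t\le kX_t$, which together with the linear dynamics \eqref{wealth} and the degree-$p$ homogeneity of the utility \eqref{utilityfunction} suggests that the value function has the form $V(x)=Ax^p$ for a positive constant $A$. I would therefore look for a classical solution of the HJB equation \eqref{pde1} of this form, identify $A$ by an algebraic computation, and then invoke the uniqueness statement of Theorem~\ref{HJBequation}.

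Substituting $w(x)=Ax^p$ into \eqref{pde1} one has $w_x(x)=Apx^{p-1}$ and $w_{xx}(x)=Ap(p-1)x^{p-2}<0$, so $w$ is increasing, strictly concave, smooth on $(0,+\infty)$, and $w(0)=0$; moreover the feedback consumption of Theorem~\ref{HJBequation} becomes $c(x)=\min\{(Ap)^{\frac1{p-1}},k\}\,x$, a fixed multiple of $x$, so the minimum is resolved globally and no free boundary in $x$ appears. Two regimes arise. If $(Ap)^{\frac1{p-1}}\le k$ the consumption constraint is inactive, \eqref{pde1} reduces to the unconstrained Merton equation, and the only admissible constant is $A=\tfrac1p\kappa^{p-1}$ (for which $(Ap)^{\frac1{p-1}}=\kappa$); this is self-consistent precisely when $k\ge\kappa$. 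If instead $(Ap)^{\frac1{p-1}}>k$, then $c(x)=kx$, and dividing \eqref{pde1} by $x^p$, using $\tfrac{\mu^2}{2\sigma^2}\,w_x^2/w_{xx}=-p\theta Ax^p$ and $\beta-p(\theta+r)=\kappa(1-p)$, yields the linear equation $A\big(\kappa(1-p)+kp\big)=\tfrac1p k^p$, hence $A=\tfrac{k^p}{p(\kappa(1-p)+kp)}$; its self-consistency condition $Ap<k^{p-1}$ simplifies, after cancellation, to $k<\kappa$. The two mutually exclusive, exhaustive regimes combine into $A=\tfrac{\min\{\kappa,k\}^p}{p(\kappa(1-p)+p\min\{\kappa,k\})}$ and $c(x)=\min\{\kappa,k\}\,x$, which is exactly \eqref{homocaseV} (the denominator is strictly positive since $\kappa,1-p,p>0$ and $\min\{\kappa,k\}>0$). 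As $w=Ax^p$ is then a classical, hence viscosity, solution of \eqref{pde1} lying in the class of increasing concave functions on $[0,+\infty)$ vanishing at $0$, Theorem~\ref{HJBequation} gives $V=w$.

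It then remains to show that the feedback strategy is admissible and attains this value, which makes it optimal. The maximizers in \eqref{pde1} are $\pi(x)=-\mu w_x(x)/(\sigma^2 w_{xx}(x))=\tfrac{\mu}{\sigma^2(1-p)}\,x$ and $c(x)=\min\{\kappa,k\}\,x$; plugging $(c_t,\pi_t)=(c(X_t),\pi(X_t))$ into \eqref{wealth} turns the wealth equation into a geometric Brownian motion $\dd X_t=X_t\big[(r+\tfrac{\mu^2}{\sigma^2(1-p)}-\min\{\kappa,k\})\dd t+\tfrac{\mu}{\sigma(1-p)}\dd W_t\big]$ with $X_0=x$, so $X_t>0$ for all $t$ a.s. and all polynomial moments of $X_t$ are finite on every $[0,T]$, giving \eqref{constaint:add1} and hence admissibility. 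A direct computation with $\BE[X_t^p]=x^p e^{(p(\theta+r)-p\min\{\kappa,k\})t}$ and $\beta-(p(\theta+r)-p\min\{\kappa,k\})=\kappa(1-p)+p\min\{\kappa,k\}>0$ evaluates $\BE\big[\int_0^\infty e^{-\beta t}\tfrac1p(\min\{\kappa,k\}X_t)^p\,\dd t\big]=Ax^p=V(x)$, establishing optimality and completing the proof.

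The genuinely routine parts are the algebraic substitution and the geometric-Brownian-motion moment integral; the points requiring a little care are verifying that the two parameter regimes are exhaustive and that each self-consistency condition is exactly the complement of the other (so the candidate patches into a single smooth function), and the admissibility check for the integrability constraint \eqref{constaint:add1}. If one prefers a self-contained argument not appealing to Theorem~\ref{HJBequation}, the upper bound $V\le w$ can instead be obtained by applying It\^o's formula to $e^{-\beta t}w(X_t)$ for an arbitrary admissible pair, bounding the drift above by $-e^{-\beta t}U(c_t)$ using \eqref{pde1}, localizing the stochastic integral, and passing to the limit with the help of $w\ge0$ and $U\ge0$ (so no transversality condition is needed); controlling that localization is then the only delicate step.
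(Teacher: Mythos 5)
Your proposal is correct and follows essentially the same route as the paper: exhibit the explicit power-function candidate as a classical (hence viscosity) solution of the HJB equation \eqref{pde1}, invoke the uniqueness assertion of Theorem~\ref{HJBequation} to identify it with $V$, and then verify that the feedback strategy \eqref{homocasec&pi2} is admissible and attains this value. The only difference is one of presentation — you derive the constant $A$ from the ansatz via the two self-consistency regimes and carry out the geometric-Brownian-motion moment computation explicitly, whereas the paper states the formula and leaves these verifications to the reader.
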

\begin{proof}
Suppose $\kappa>0$. Let $V(\cdot)$ defined as in \eqref{homocaseV}. Then
 \begin{multline*}
c(x)=\min\left\{(V_x(x))^{\frac{1}{p-1}}, k x+\ell \right\}=\min\left\{(V_x(x))^{\frac{1}{p-1}}, k x \right\}\\
=\min\left\{ \frac{\min\left\{\kappa, k \right\}^{\frac{p}{p-1}}}{(\kappa(1-p)+\min\left\{ \kappa, k \right\}p)^{\frac{1}{p-1}}}, k\right\}x
=\min\left\{ \kappa, k \right\}x,
\end{multline*}
where we used the fact that
\begin{align*}
 \frac{\min\left\{\kappa, k \right\}^{\frac{p}{p-1}}}{(\kappa(1-p)+\min\left\{ \kappa, k \right\}p)^{\frac{1}{p-1}}}
 \geqslant  \frac{\min\left\{\kappa, k \right\}^{\frac{p}{p-1}}}{(\min\left\{ \kappa, k \right\}(1-p)+\min\left\{ \kappa, k \right\}p)^{\frac{1}{p-1}}}
 =\min\left\{ \kappa, k \right\}=k,
\end{align*}
when $k<\kappa$.
It is easy to check that $V(\cdot)$ and $c(\cdot)$ satisfy HJB equation \eqref{pde1}. Because $V(\cdot)$ is increasing and concave with $V(0)=0$, by Theorem \eqref{HJBequation}, $V(\cdot)$ is the value function of problem \eqref{objective0}.
 It is easy to verify that the value \eqref{homocaseV} is achieved by taking the consumption-investment strategy \eqref{homocasec&pi2}.
\end{proof}
\begin{coro} \label{k>kappa}
 If $k\geqslant \kappa>0$ and $\ell\geqslant 0$, then the optimal consumption-investment strategy for problem \eqref{objective0} is given by
\begin{align}\label{homocasec&pi3}
 (c_t, \pi_t)=&\left(\kappa X_t, \frac{ \mu }{\sigma^2(1-p)} X_t\right), \quad t\geqslant 0,
\end{align}
and the optimal value is
\begin{align}\label{homocaseV&kappa<k}
 V(x)=\frac{1}{p}{\kappa}^{p-1}x^p.
\end{align}
If $\kappa \leqslant 0$ and $\ell\geqslant 0$, then problem \eqref{objective0} is ill-possed, i.e., its optimal value is infinity.
\end{coro}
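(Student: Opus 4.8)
The plan is to handle the two assertions separately, obtaining the first from Theorem~\ref{trivialcase} together with the upper bound of Proposition~\ref{propertiesbasicV}, and the second by exhibiting an explicit one‑parameter family of geometric Brownian motion strategies whose payoff blows up.

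For the claim with $k\geqslant\kappa>0$, I would first note that the Merton strategy \eqref{homocasec&pi3} is feasible for the constrained problem \eqref{objective0}: since $X_t\geqslant0$ and $\ell\geqslant0$ we have $0\leqslant\kappa X_t\leqslant kX_t\leqslant kX_t+\ell$, so \eqref{constaint:add3} holds; the corresponding wealth is a positive geometric Brownian motion, so bankruptcy \eqref{constaint:nobankruptcy} never occurs and the moment bounds for that process give the integrability \eqref{constaint:add1}. The admissible set for $\ell=0$ is contained in that for $\ell\geqslant 0$, so Theorem~\ref{trivialcase} (its $k\geqslant\kappa$ branch) shows that \eqref{homocasec&pi3} attains the value $\tfrac1p\kappa^{p-1}x^p$, whence $V(x)\geqslant\tfrac1p\kappa^{p-1}x^p$; combining with $V(x)\leqslant\tfrac1p\kappa^{p-1}x^p$ from Proposition~\ref{propertiesbasicV} gives \eqref{homocaseV&kappa<k} and the optimality of \eqref{homocasec&pi3}.

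For the ill‑posedness claim I would work under the (implicit) hypothesis $k>0$ — if $k=0$ then \eqref{constaint:add3} forces $c_t\leqslant\ell$ and hence $V(x)\leqslant\ell^p/(p\beta)<\infty$, so the statement should be read with $k>0$, and I would point this out. Fix $\epsilon\in(0,k]$ and take the stationary‑proportion strategy $(c_t,\pi_t)=\big(\epsilon X_t,\tfrac{\mu}{\sigma^2(1-p)}X_t\big)$. Then $c_t=\epsilon X_t\leqslant kX_t\leqslant kX_t+\ell$, and the wealth $X(\cdot)$ is a positive geometric Brownian motion with finite moments of all orders, so this strategy is admissible. A direct computation of $\BE[X_t^p]$ for geometric Brownian motion (using the definition of $\theta$) gives $\BE[X_t^p]=x^p e^{(p(\theta+r)-p\epsilon)t}$, and then, by Tonelli,
\begin{align*}
\BE\Big[\int_0^\infty e^{-\beta t}U(c_t)\,\dd t\Big]=\frac{\epsilon^p x^p}{p}\int_0^\infty e^{(p(\theta+r)-\beta-p\epsilon)t}\,\dd t .
\end{align*}

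Finally I would use the equivalence $\kappa\leqslant0\iff\beta\leqslant p(\theta+r)$. When $\kappa<0$ one may pick $\epsilon\in(0,k]$ small enough that $p\epsilon\leqslant p(\theta+r)-\beta$; then the exponent above is nonnegative, the integral diverges, and $V(x)=+\infty$. When $\kappa=0$ no single $\epsilon>0$ does this, so I would instead note that for every $\epsilon\in(0,k]$ the integral equals $\frac{1}{p\epsilon-(p(\theta+r)-\beta)}\geqslant\frac{1}{p\epsilon}$, whence $V(x)\geqslant\frac{\epsilon^{p-1}x^p}{p^2}$, and let $\epsilon\downarrow0$ (using $p-1<0$) to conclude $V(x)=+\infty$. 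The only delicate point I anticipate is exactly this borderline case $\kappa=0$, where one must pass to the limit in the family of strategies rather than produce a single witnessing strategy; everything else is the standard geometric‑Brownian‑motion moment computation together with the feasibility checks.
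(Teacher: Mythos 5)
Your proof is correct, and its two halves sit differently relative to the paper's own argument. For the case $k\geqslant\kappa>0$ you do essentially what the paper does: the upper bound $V(x)\leqslant\frac1p\kappa^{p-1}x^p$ comes from relaxing the constraint to $\ell=+\infty$ (Proposition \ref{propertiesbasicV} and \eqref{unconstraintV}), and the Merton strategy \eqref{homocasec&pi3} is checked to be feasible under \eqref{constaint:add3} and to attain this bound; routing the attainment through the $\ell=0$ case of Theorem \ref{trivialcase} rather than recomputing is a harmless repackaging. For the ill-posedness claim your route is genuinely different: you exhibit the explicit one-parameter family of proportional strategies $c_t=\epsilon X_t$, compute $\BE[X_t^p]=x^p e^{(p(\theta+r)-p\epsilon)t}$ directly, and let $\epsilon\downarrow0$, whereas the paper argues abstractly that the value of problem \eqref{objective0} is decreasing in $\beta$ and that $\frac1p\kappa^{p-1}x^p\to+\infty$ as $\kappa\downarrow0+$. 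Both arguments are valid, and both rely on the same hidden hypothesis $k>0$: the paper's limit argument needs the first part of the corollary to apply for arbitrarily small $\kappa'>0$ (hence $k\geqslant\kappa'>0$), and your strategies need $\epsilon\leqslant k$. Your explicit remark that the ill-posedness assertion fails when $k=0$ and $\ell>0$ (since then $c_t\leqslant\ell$ forces $V(x)\leqslant\ell^p/(p\beta)<\infty$) is a correct and worthwhile observation that the statement and the paper's proof both gloss over. Your moment computation for the geometric Brownian motion and your handling of the borderline case $\kappa=0$ by passing to the limit in $\epsilon$ are both correct; what your approach buys is a self-contained, constructive proof of blow-up, while the paper's buys brevity at the cost of leaning on monotonicity in $\beta$ and on the already-established formula for $\kappa>0$.
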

\begin{proof}
Both the set of admissible controls and the optimal value are increasing in $\ell$ and consequently, the scenario $\ell=+\infty$ gives an upper bound \eqref{unconstraintV}, $V(x)\leqslant V^{\infty}(x)=\frac{1}{p}{\kappa}^{p-1}x^p$.
It is easy to verify that the upper bound $\frac{1}{p}{\kappa}^{p-1}x^p$ is achieved by taking the consumption-investment strategy \eqref{homocasec&pi3}.
\par
If $\kappa$ goes down to $0$, then the optimal value $V(x)=\frac{1}{p } \kappa^{p-1}x^p$ goes to infinity.
Because the optimal value of problem \eqref{objective0} is decreasing in $\beta$, we conclude that $V(x)=+\infty$ if $\kappa\leqslant 0$ and $\ell\geqslant 0$.
\end{proof}
\\[10pt]
By Theorem \ref{trivialcase} and Corollary \ref{k>kappa}, we only need to study the scenario
\begin{align*}
\kappa>k>0,\quad \ell>0,
\end{align*}
 which are henceforth assumed unless otherwise specified.
\begin{remark}\label{k=0}
We will not study the scenario $\kappa>k=0$ and $\ell>0$, because it can be treated easily by a similar argument as follows. We will address this issue again at the end of the paper.
\end{remark}

\section{The Value Function: Continuity of the First Order Derivative}
\noindent
\begin{thm}\label{V:c1}
The value function $V(\cdot)$ of problem \eqref{objective0} is in $C[0,+\infty)\cap C^1(0,+\infty)$ if $r\leqslant k$;
and in $C[0,+\infty)\cap C^1((0,+\infty)\backslash \{x_e\})$ if $r>k$, where
\begin{align}\label{xe}
x_e:=\frac{\ell}{r-k},
\end{align}
 is the unique possibility of exception point, in which case, $V_x(x_e-)\leqslant (k x_e+ \ell)^{p-1}$ and $V(x_e)=\frac{1}{\beta p}(k x_e+ \ell)^{p}$.
\end{thm}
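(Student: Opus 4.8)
The plan is to combine the two facts already in hand: $V$ is concave and nonnegative with $V(0)=0$ (Proposition~\ref{propertiesbasicV}), and $V$ is \emph{the} viscosity solution of the HJB equation~\eqref{pde1} (Theorem~\ref{HJBequation}). Continuity on $[0,+\infty)$ is part of Proposition~\ref{propertiesbasicV}, so only first-order smoothness on $(0,+\infty)$ (resp.\ on $(0,+\infty)\setminus\{x_e\}$) has to be established. By concavity the one-sided derivatives exist everywhere with $V_x(x_0-)\ge V_x(x_0+)$, and $C^1$ at an interior point $x_0$ is exactly the absence of a downward kink $V_x(x_0-)>V_x(x_0+)$; moreover, writing \eqref{pde1} at a point of twice differentiability with $V_x=0$ would force $\beta V=\tfrac1p(kx+\ell)^p$, which is impossible on an interval since $k>0$, so $V$ is in fact strictly increasing with $V_x>0$ throughout. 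Hence it suffices to assume a kink, i.e.\ $0<a:=V_x(x_0+)<b:=V_x(x_0-)$, and to show this forces $r>k$ and $x_0=x_e$, together with the two stated relations.

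For $q>0$ set $\bar c(q):=\min\{q^{1/(p-1)},\,kx_0+\ell\}$ and
\[
\Psi(q):=\beta V(x_0)+\big(\bar c(q)-rx_0\big)q-\tfrac1p\bar c(q)^p=\beta V(x_0)-rx_0q+\min_{0\le c\le kx_0+\ell}\!\big(cq-\tfrac{c^p}{p}\big).
\]
Step~1 (subsolution): for each interior slope $q\in(a,b)$ and each $n>0$ the parabola $\phi_{q,n}(x):=V(x_0)+q(x-x_0)-n(x-x_0)^2$ touches $V$ from above at $x_0$, because $V(x)-V(x_0)-q(x-x_0)\le-\varepsilon|x-x_0|$ near $x_0$ for some $\varepsilon>0$; applying the viscosity subsolution inequality with $(\phi_{q,n}(x_0),\phi_{q,n}'(x_0),\phi_{q,n}''(x_0))=(V(x_0),q,-2n)$, computing the supremum over $\pi$, and letting $n\to+\infty$ kills the second-order term and yields $\Psi(q)\le0$ for all $q\in(a,b)$. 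Step~2 (supersolution via a.e.\ points): a concave function is twice differentiable a.e., and at such points $V$ solves \eqref{pde1} classically with $V_{xx}<0$, so there $\beta V+(c-rx)V_x-\tfrac{c^p}{p}=-\tfrac{\mu^2}{2\sigma^2}\tfrac{V_x^2}{V_{xx}}\ge0$; choosing twice-differentiable points $x_n\downarrow x_0$ and $x_n\uparrow x_0$ and passing to the limit, using $V_x(x_n)\to V_x(x_0\pm)$ (one-sided continuity of the monotone one-sided derivatives) and $c(x_n)\to\bar c(V_x(x_0\pm))$, gives $\Psi(a)\ge0$ and $\Psi(b)\ge0$. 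Step~3 (rigidity of $\Psi$): by the envelope identity $\Psi'(q)=\bar c(q)-rx_0$, which is non-increasing, $\Psi$ is concave on $(0,+\infty)$; a concave function that is $\le0$ on $(a,b)$ and $\ge0$ at both endpoints must vanish on $[a,b]$, so $\Psi\equiv0$ and therefore $\bar c(q)\equiv rx_0$ on $(a,b)$. Since $q\mapsto q^{1/(p-1)}$ is strictly decreasing, $\bar c$ cannot be constant on a nondegenerate interval via that branch, so $\bar c\equiv kx_0+\ell$ there and $kx_0+\ell=rx_0$; this admits a positive root $x_0$ only if $r>k$, namely $x_0=\ell/(r-k)=x_e$. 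Plugging back, $\Psi\equiv0$ at $x_e$ reads $\beta V(x_e)=\tfrac1p(kx_e+\ell)^p$, and $\bar c(q)=kx_e+\ell$ on $(a,b)$ forces $b=V_x(x_e-)\le(kx_e+\ell)^{p-1}$; when $r\le k$ the equation $kx_0+\ell=rx_0$ has no positive root, so no kink can occur and $V\in C^1(0,+\infty)$.

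The main obstacle is Step~2, i.e.\ producing the ``reverse'' inequalities $\Psi(a)\ge0$ and $\Psi(b)\ge0$: at a downward kink the (convex) subdifferential of $V$ is empty, so the viscosity supersolution condition at $x_0$ is vacuous and cannot be invoked directly. One must instead route through Alexandrov-type a.e.\ twice differentiability of the concave function $V$, verify carefully that $V$ genuinely satisfies \eqref{pde1} in the classical sense at those points (so that the favorable sign of the degenerate second-order term is available and $V_{xx}<0$), and control the one-sided limits of $V_x$ and of the feedback consumption $c(\cdot)$ as these points approach $x_0$. Step~1 and the concavity/rigidity argument for $\Psi$ are comparatively elementary once the correct test functions and the envelope identity $\Psi'=\bar c-rx_0$ are in place.
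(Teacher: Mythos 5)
Your argument is correct and is essentially the paper's proof: the same parabolic test functions $\phi_{q,n}$ with $n\to\infty$ give the one-sided inequality on the kink interval, the same passage through Alexandrov points $x_n\to x_0\pm$ gives the reverse inequality at the endpoints, and the same convexity rigidity pins down $kx_0+\ell=rx_0$, hence $r>k$, $x_0=x_e$, $V_x(x_e-)\leqslant(kx_e+\ell)^{p-1}$ and $\beta V(x_e)=\tfrac1p(kx_e+\ell)^p$. The only cosmetic differences are that you work with the concave function $\Psi(q)=\beta V(x_0)-g(q)$ and identify the constancy of $g$ via the envelope identity $\Psi'=\bar c-rx_0$, whereas the paper writes out $g$ piecewise and reads off the same conclusion.
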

\begin{proof}
It is proved that $V(\cdot)\in C[0,+\infty)$ in Proposition \ref{propertiesbasicV}. Note $V(\cdot)$ is increasing and concave, so we can define the right and left derivatives as
\begin{align*}
V_x(x\pm ):=\lim_{\varepsilon\to 0+}\frac{V(x\pm \varepsilon)-V(x)}{ \pm \varepsilon}\geqslant 0,
\end{align*}
for all $x>0$. Moreover, both $V_x(\cdot\pm )$ are decreasing functions and $0\leqslant V_x(x+)\leqslant V_x(x-)<+\infty$ for all $x>0$.
\par
Now we show that $V(\cdot)$ is continuously differentiable on $(0,+\infty)\backslash\{x_e\}$. By Darboux's Theorem, it is sufficient to show that $V(\cdot)$ is differentiable on $(0,+\infty)\backslash\{x_e\}$, which is equivalent to $V_x(x-)= V_x(x+)$ for all positive $x \neq x_e$.
\par
Per absurdum, suppose $V_x(x_0+)< V_x(x_0-)$ for some $x_0>0$. Let $\xi$ be any number satisfying $V_x(x_0+)<\xi< V_x(x_0-)$.
Define
\begin{align*}
\phi(x)=V(x_0)+\xi (x-x_0)-N(x-x_0)^2,
\end{align*}
where $N$ is any large positive number. Then by the concavity of $V(\cdot)$,
\begin{multline*}
V(x)\leqslant V(x_0)+V_x(x_0-) (x-x_0)=\phi(x)+(V_x(x_0-)-\xi) (x-x_0)+N(x-x_0)^2\\
<\phi(x), \quad \textrm{ if } 0<x_0-x<\frac{1}{N}(V_x(x_0-)-\xi);
\end{multline*}
and
\begin{multline*}
V(x)\leqslant V(x_0)+V_x(x_0+) (x-x_0)=\phi(x)+(V_x(x_0+)-\xi) (x-x_0)+N(x-x_0)^2\\
<\phi(x), \quad \textrm{ if } 0<x-x_0<\frac{1}{N}(\xi-V_x(x_0+)).
\end{multline*}
Therefore, $V(x_0)=\phi(x_0)$ and $V(x)<\phi(x)$ in a neighbourhood of $x_0$. By Theorem \ref{HJBequation}, $V(\cdot)$ is a viscosity solution of HJB \eqref{pde1}, noting $\phi(\cdot)\in C^2(0,+\infty)$, so
\begin{multline*}
0 \geqslant
\beta \phi(x_0)-\sup_{\pi}\left(\tfrac{1}{2}\sigma^2\pi^2 \phi_{xx}(x_0)+\pi\mu \phi_{x}(x_0)\right)-\sup_{ 0 \leqslant c\leqslant k x_0+ \ell}\left(U(c)-c\phi_{x}(x_0)\right)-rx_0\phi_{x}(x_0)\\
=\beta V(x_0)-\frac{\mu^2\xi^2}{4\sigma^2N}-\sup_{ 0 \leqslant c\leqslant k x_0+ \ell}\left(U(c)-c\xi \right)-rx_0\xi
=\beta V(x_0)-\frac{\mu^2\xi^2}{4\sigma^2N}-g(\xi),
\end{multline*}
where
\begin{align*}
 g(\xi):=\sup_{ 0 \leqslant c\leqslant k x_0+ \ell}\left(U(c)-c\xi \right)+rx_0\xi,\quad 0<\xi<+\infty.
\end{align*}
Letting $N\to+\infty$, we get
\begin{align}\label{g(xi)>betav}
g(\xi)\geqslant \beta V(x_0),
\end{align}
for all $\xi\in (V_x(x_0+) ,V_x(x_0-))$.
\par%
On the other hand, because $V(\cdot)$ is concave, it is second order differentiable almost everywhere, and consequently, there exists a sequence $\{x_n: n\geqslant 1\}$ going up to $x_0$ such that $V(\cdot)$ is both first and second order differentiable at each $x_n$. By Theorem \ref{HJBequation},
\begin{multline*}
0 =\beta V(x_n)-\sup_{\pi}\left(\tfrac{1}{2}\sigma^2\pi^2 V_{xx}(x_n)+\pi\mu V_{x}(x_n)\right)-\sup_{ 0 \leqslant c\leqslant k x_n+ \ell}\left(U(c)-cV_{x}(x_n)\right)-rx_nV_{x}(x_n)\\
\leqslant \beta V(x_n) -\sup_{ 0 \leqslant c\leqslant k x_n+ \ell}\left(U(c)-cV_{x}(x_n)\right)-rx_nV_{x}(x_n)\\
=\beta V(x_n) -g(V_{x}(x_n))+r(x_0-x_n)V_{x}(x_n).
\end{multline*}
So \begin{align*}
g(V_{x}(x_n))\leqslant \beta V(x_n) +r(x_0-x_n)V_{x}(x_n).
\end{align*}
Note that $g(\cdot)$ is convex on $(0,+\infty)$, so it is continuous on $(0,+\infty)$. Hence
\begin{align}\label{g(v_x(x-))leqV}
g(V_x(x_0-))=\lim_{n\to+\infty }g(V_{x}(x_n))\leqslant \lim_{n\to+\infty }(\beta V(x_n) +r(x_0-x_n)V_{x}(x_n)) =\beta V(x_0).
\end{align}
Similarly, we have
\begin{align}\label{g(v_x(x+))leqV}
g(V_x(x_0+))\leqslant \beta V(x_0).
\end{align}
Noting that $g(\cdot)$ is convex on $(0,+\infty)$ and \eqref{g(xi)>betav},
\begin{align}\label{g(v_x(x-))g(v_x(x+))=V}
\max\{g(V_x(x_0-)), g(V_x(x_0+))\}\geqslant g(\xi)\geqslant \beta V(x_0),\quad V_x(x_0+) <\xi <V_x(x_0-).
\end{align}
By \eqref{g(v_x(x-))leqV}, \eqref{g(v_x(x+))leqV}, and \eqref{g(v_x(x-))g(v_x(x+))=V}, we conclude that
$g(\xi)=\beta V(x_0)$ for all $\xi\in [V_x(x_0+) ,V_x(x_0-)]$.
Note
\begin{multline*}
 g(\xi)=\sup_{ 0 \leqslant c\leqslant k x_0+ \ell}\left(U(c)-c\xi \right)+rx_0\xi\\
 =\begin{cases}
 U(k x_0+ \ell)-( k x_0+ \ell-rx_0)\xi, &\quad \textrm{ if } \xi\leqslant (k x_0+ \ell)^{p-1};\\
 \left(\tfrac{1}{p}-1\right)\xi^{\frac{p}{p-1}}+rx_0 \xi, &\quad \textrm{ if } \xi> (k x_0+ \ell)^{p-1}.
 \end{cases}
\end{multline*}
Therefore, $g(\cdot)$ is a constant on $ [V_x(x_0+) ,V_x(x_0-)]$ if and only if $k x_0+ \ell-rx_0=0$ and $V_x(x_0-)\leqslant (k x_0+ \ell)^{p-1}$. It can only happen in the scenario $r>k$, $x_0=x_e$ and $V_x(x_e-)\leqslant (k x_e+ \ell)^{p-1}$, in which case, $\beta V(x_e)=g(\xi)=\frac{1}{ p}(k x_e+ \ell)^{p}$. The proof is complete.
\end{proof}
\par
Although $V(\cdot)$ may not be differentiable at $x_e$ when $r>k$, we can still define $V_x(x_e-)$.
From now on, we denote $V_x(x_e ):=V_x(x_e-)$ unless otherwise specified.

\section{The Value Function: Properties}
\noindent
\begin{prop}\label{propertiesV}
The value function $V(\cdot)$ of problem \eqref{objective0} satisfies the following properties:
\begin{enumerate}[(a).]
 \item $V(x)/x^p$ is decreasing, and hence
\begin{align*}
 xV_x(x)\leqslant pV(x), \quad x>0;
\end{align*}
\item we have
\begin{align*}
 \frac{k^p}{p(\kappa(1-p)+kp)}x^p \leqslant V(x)\leqslant \frac{1}{p}{\kappa}^{p-1}x^p, \quad x>0;
\end{align*}
\item $V(\cdot)$ is strictly concave on $(0,+\infty)$ and $V_x(\cdot)$ is strictly decreasing on $(0,+\infty) $;
\item we have
\begin{align*}
 \frac{k }{ \kappa(1-p)+kp } {\kappa}^{p-1}x^{p-1} \leqslant V_x(x)\leqslant {\kappa }^{p-1}x^{p-1}, \quad x>0.
\end{align*}
\end{enumerate}
\end{prop}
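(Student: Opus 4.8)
The plan is to establish the four items in the stated order, each leveraging the previous ones together with Proposition~\ref{propertiesbasicV}, Theorem~\ref{HJBequation}, Theorem~\ref{trivialcase} and Theorem~\ref{V:c1}.

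\emph{Item (a).} I would exploit the (almost) scale invariance of the problem. Fix $\lambda\geqslant 1$ and let $(c(\cdot),\pi(\cdot))$ be admissible for initial wealth $\lambda x$, with wealth process $X(\cdot)$ from \eqref{wealth}. Then $(c(\cdot)/\lambda,\pi(\cdot)/\lambda)$ is admissible for initial wealth $x$: by linearity of \eqref{wealth} its wealth process is $X(\cdot)/\lambda\geqslant 0$, the integrability condition \eqref{constaint:add1} is inherited since $\lambda\geqslant1$, and the consumption constraint \eqref{constaint:add3} holds because $c_t/\lambda\leqslant kX_t/\lambda+\ell/\lambda\leqslant k(X_t/\lambda)+\ell$. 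As $U(c_t/\lambda)=\lambda^{-p}U(c_t)$, taking the supremum over admissible controls yields $V(\lambda x)\leqslant\lambda^p V(x)$, i.e.\ $V(x)/x^p$ is decreasing on $(0,+\infty)$. Differentiating at any point where $V$ is differentiable — which by Theorem~\ref{V:c1} is every $x>0$ save possibly $x_e$, where one instead lets $x\to x_e-$ using $V_x(x_e)=V_x(x_e-)$ and the continuity of $V$ — gives $xV_x(x)\leqslant pV(x)$.

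\emph{Item (b).} The upper bound is Proposition~\ref{propertiesbasicV} (equivalently \eqref{upperboundV}). For the lower bound, every control admissible for the homogeneous problem (the case $\ell=0$) is a fortiori admissible for \eqref{objective0}, so $V(x)$ dominates the value function of the homogeneous problem, which by Theorem~\ref{trivialcase}, combined with the standing assumption $\kappa>k$, equals $\frac{k^p}{p(\kappa(1-p)+kp)}x^p$.

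\emph{Item (c).} First I would check $V_x(x)>0$ for every $x>0$: otherwise $V_x(x_0+)=0$ for some $x_0$, and since $V_x$ is nonnegative and nonincreasing, $V$ would be constant on $[x_0,+\infty)$, contradicting the lower bound in (b), which forces $V(x)\to+\infty$. Next, $V$ being concave, it is twice differentiable at a.e.\ point, and at each such point the HJB equation \eqref{pde1} holds in the classical sense — a standard property of viscosity solutions at points of twice-differentiability. If $V_{xx}(x)=0$ at such a point, then, since $V_x(x)>0$ and $\mu>0$, $\sup_{\pi}\big(\tfrac12\sigma^2\pi^2V_{xx}(x)+\pi\mu V_x(x)\big)=+\infty$, contradicting \eqref{pde1}; hence $V_{xx}(x)<0$ for a.e.\ $x$. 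If $V$ were not strictly concave it would be affine on a nondegenerate interval, forcing $V_{xx}=0$ there, a contradiction; so $V$ is strictly concave, and since $V\in C^1$ off $x_e$ by Theorem~\ref{V:c1}, $V_x$ is strictly decreasing on $(0,+\infty)$. I expect this to be the main obstacle: it is the one step genuinely using the viscosity-solution characterization, and care is needed around the potential exception point $x_e$ (the a.e.-differentiability and blow-up argument should be run on $(0,x_e)$ and $(x_e,+\infty)$ separately).

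\emph{Item (d).} The upper bound follows at once from (a) and (b): $V_x(x)\leqslant pV(x)/x\leqslant\kappa^{p-1}x^{p-1}$. For the lower bound — which in fact needs only the concavity of $V$, not the strict concavity of (c) — concavity gives, for every $x>0$ and every $y>x$, the chord estimate $V_x(x)\geqslant\big(V(y)-V(x)\big)/(y-x)$ (again with $V_x(x_e):=V_x(x_e-)$). Taking the choice $y=\kappa x/k$ (note $\kappa/k>1$) and inserting the two bounds from (b), namely $V(\kappa x/k)\geqslant\frac{k^p}{p(\kappa(1-p)+kp)}(\kappa x/k)^p$ and $V(x)\leqslant\frac1p\kappa^{p-1}x^p$, a direct computation collapses the right-hand side to exactly $\frac{k}{\kappa(1-p)+kp}\kappa^{p-1}x^{p-1}$, which is the claimed bound.
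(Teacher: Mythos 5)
Your proposal is correct and follows essentially the same route as the paper: (a) via the degree-$p$ scaling of the problem (you just unpack the homogeneity at the level of controls), (b) via monotonicity in $\ell$ together with the $\ell=0$ value from Theorem~\ref{trivialcase}, (c) via blow-up of the $\sup_\pi$ term in the HJB equation where $V_{xx}$ would vanish (the paper argues on an affine interval, you at a.e.\ points of twice differentiability — a cosmetic difference), and (d) via the same chord estimate with endpoint $\kappa x/k$.
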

\begin{proof}
We first consider the scenario $x\neq x_e$.
\begin{enumerate}[(a).]
\item
Let $V(x,\ell)$ denote the value function $V(x)$ with constraint \eqref{constaint:add3}.
Given the form of CRRA type utility function \eqref{utilityfunction}, the dynamics \eqref{wealth} and constraint \eqref{constaint:add3}, a standard argument can show that $V(\cdot, \cdot)$ is homogeneous of degree $p$, i.e.,
$$V(\lambda x, \lambda\ell)=\lambda^p V(x,\ell),\quad \lambda> 0.$$
Letting $\lambda=x^{-1}$,
\begin{align*}
V(1, x^{-1}\ell) =x^{-p} V(x,\ell),
\end{align*}
the property (a) follows from $V(1, x^{-1}\ell)$ is decreasing in $x$.
\item
The upper bound is given by \eqref{upperboundV}. The lower bound given by the scenario $\ell=0$ is \eqref{homocaseV}.
\item Note $V_x(\cdot)$ is decreasing by the concavity of $V(\cdot)$. Suppose it is not strictly decreasing. Then $V_x(x)=A$, $x\in(x_1,x_2)$ for some constant $A\geqslant 0$ and $(x_1, x_2)\subset (0,+\infty)$. It follows that $V_{xx}(x)=0$, $x\in(x_1,x_2)$. If $A=0$, because $V(\cdot)$ is concave and increasing, $V_x(x)=0$, $x\in(x_1,+\infty)$ which contradicts the property (b). Suppose $A>0$. Applying HJB equation \eqref{pde1},
\begin{multline*}
\beta V(x)-\sup_{\pi}\left(\tfrac{1}{2}\sigma^2\pi^2 V_{xx}(x)+\pi\mu V_{x}(x)\right)-\sup_{ 0 \leqslant c\leqslant k x+ \ell}\left(U(c)-cV_{x}(x)\right)-rxV_{x}(x)=0,\\
\quad x_1<x<x_2,
\end{multline*}
we get
\begin{align*}
\beta V(x)=\sup_{\pi}( \pi\mu A) +\sup_{ 0 \leqslant c\leqslant k x+ \ell} (U(c)-cA )+rxA=+\infty,
\quad x_1<x<x_2,
\end{align*}
 which contradicts the property (b).
\item The upper bound follows from the the properties (a) and (b).
Note $V(\cdot)$ is concave and apply the property (b),
\begin{multline*}
 V_x(x)\geqslant \frac{V(x+y)-V(x)}{y}\\
 \geqslant \frac{1}{y}\left( \frac{k^p}{p(\kappa(1-p)+kp)}(x+y)^p- \frac{1}{p}{\kappa}^{p-1}x^p\right), \quad x>0, \; y>0.
\end{multline*}
Let $y=\frac{\kappa-k}{k}x$ in the above inequality,
\begin{multline*}
 V_x(x) \geqslant \frac{k}{(\kappa-k)x}\left( \frac{k^p}{p(\kappa(1-p)+kp)}\left(\frac{\kappa}{k}\right)^px^p- \frac{1}{p}{\kappa}^{p-1}x^p\right)\\
 = \frac{k}{(\kappa-k)x}\left( \frac{ \kappa}{ \kappa(1-p)+kp} -1\right) \frac{1}{p}{\kappa}^{p-1}x^p\\
 = \frac{k}{ \kappa-k }\left( \frac{ \kappa p-kp}{\kappa(1-p)+kp} \right) \frac{1}{p}{\kappa}^{p-1}x^{p-1}\\
 =\frac{k}{\kappa(1-p)+kp} {\kappa}^{p-1}x^{p-1},\quad x>0.
\end{multline*}
Thus the property (d) is proved.
\end{enumerate}
For the scenario $x=x_e$, all the properties can be proved by a limit argument.
The proof is complete.
\end{proof}
\par
Define an unconstrained trading region $\mathcal{U}$ and a constrained trading region $\mathcal{C}$ as follows:
\begin{align*}
 \mathcal{U}:=&\{x>0: V_x(x)^{\frac{1}{p-1}} < k x+\ell\},\\
 \mathcal{C}:=&\{x>0: V_x(x)^{\frac{1}{p-1}}\geqslant k x+\ell \}.
\end{align*}
One of the main results of this paper is providing detailed descriptions of these two regions.
\par
It follows from Theorem \eqref{HJBequation} that
\begin{align}
 \label{pdeU}\beta V(x)+\frac{\mu^2}{2\sigma^2}\frac{V_x^2(x)}{V_{xx}(x)} -rx V_x(x)+\left(1-\tfrac{1}{p}\right)V_x(x)^{\frac{p}{p-1}}&=0, \quad x\in\mathcal{U};\\
 \label{pdeC} \beta V(x)+\frac{\mu^2}{2\sigma^2}\frac{V_x^2(x)}{V_{xx}(x)}+(k x+\ell-rx)V_x(x)-\tfrac{1}{p}(kx+\ell)^p&=0,\quad x\in\mathcal{C}.
\end{align}
\par
Define
\begin{align*}
 \eta:=\left(\frac{k}{\kappa(1-p)+kp}\right)^{\frac{1}{p-1}}\kappa >\kappa.
\end{align*}
\begin{prop}
We have
\begin{align}\label{subsetofC}
 \left(\frac{\ell}{\kappa -k},\;+\infty\right)\subseteq \mathcal{C},
\end{align}
and
\begin{align}\label{subsetofU}
 \left(0, \frac{\ell}{\eta -k} \right)\subseteq \mathcal{U}.
\end{align}
\end{prop}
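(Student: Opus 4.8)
The plan is to translate the two-sided bound on $V_x$ from Proposition \ref{propertiesV}(d) into a two-sided bound on the quantity $V_x(x)^{\frac{1}{p-1}}$ that appears in the definitions of $\mathcal{U}$ and $\mathcal{C}$, and then finish by elementary algebra using $\kappa>k>0$ and $\eta>\kappa>k$.

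First I would recall from Proposition \ref{propertiesV}(d) that
\begin{align*}
 \frac{k}{\kappa(1-p)+kp}\,{\kappa}^{p-1}x^{p-1}\;\leqslant\; V_x(x)\;\leqslant\;{\kappa}^{p-1}x^{p-1},\qquad x>0,
\end{align*}
including at $x=x_e$ under the convention $V_x(x_e):=V_x(x_e-)$. Since $0<p<1$, the exponent $\frac{1}{p-1}$ is negative, so the map $t\mapsto t^{\frac{1}{p-1}}$ is strictly decreasing on $(0,+\infty)$; applying it to the displayed inequalities reverses them and yields
\begin{align*}
 \kappa x\;\leqslant\; V_x(x)^{\frac{1}{p-1}}\;\leqslant\;\left(\frac{k}{\kappa(1-p)+kp}\right)^{\frac{1}{p-1}}\kappa x=\eta x,\qquad x>0,
\end{align*}
where the last equality is just the definition of $\eta$. (Here one checks in passing that $\kappa>k>0$ forces $\kappa(1-p)+kp>k$, hence the base $\frac{k}{\kappa(1-p)+kp}$ lies in $(0,1)$, so raising it to the negative power $\frac{1}{p-1}$ gives a factor $>1$, confirming $\eta>\kappa>k$ as asserted in the definition; in particular $\eta-k>0$ and $\frac{\ell}{\eta-k}$ is a well-defined positive number.)

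For \eqref{subsetofC}: if $x>\frac{\ell}{\kappa-k}$ then $(\kappa-k)x>\ell$, i.e.\ $\kappa x>kx+\ell$, so the lower bound above gives $V_x(x)^{\frac{1}{p-1}}\geqslant \kappa x>kx+\ell$, whence $x\in\mathcal{C}$. For \eqref{subsetofU}: if $0<x<\frac{\ell}{\eta-k}$ then $(\eta-k)x<\ell$, i.e.\ $\eta x<kx+\ell$, so the upper bound gives $V_x(x)^{\frac{1}{p-1}}\leqslant \eta x<kx+\ell$, whence $x\in\mathcal{U}$. This completes the argument. I do not anticipate any real obstacle here; the only point that needs care is tracking the reversal of inequalities under the decreasing map $t\mapsto t^{\frac{1}{p-1}}$ and noticing that the two endpoints $\frac{\ell}{\kappa-k}$ and $\frac{\ell}{\eta-k}$ are exactly where the linear barriers $\kappa x$ and $\eta x$ cross the line $kx+\ell$.
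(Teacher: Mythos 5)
Your proof is correct and follows essentially the same route as the paper's: both arguments apply the two-sided bound on $V_x$ from Proposition \ref{propertiesV}(d), invert it through the decreasing map $t\mapsto t^{\frac{1}{p-1}}$ to get $\kappa x\leqslant V_x(x)^{\frac{1}{p-1}}\leqslant \eta x$, and compare these linear barriers with $kx+\ell$. Your additional check that $\eta>\kappa>k$ is a harmless (and welcome) verification of a fact the paper asserts without proof in its definition of $\eta$.
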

\begin{proof}
By the property (d) in Proposition \ref{propertiesV}, we have
\begin{align*}
 V_x(x) \leqslant {\kappa}^{p-1}x^{p-1}, \quad x>0,
\end{align*}
and hence,
\begin{align*}
V_x(x)^{\frac{1}{p-1}}\geqslant \kappa x>kx+\ell, \quad \textrm{ if } x\in \left(\frac{\ell}{\kappa -k},\;+\infty\right),
\end{align*}
thus \eqref{subsetofC} follows.
\par
Similarly,
we have
\begin{align*}
V_x(x) \geqslant \frac{k }{ \kappa(1-p)+kp } {\kappa}^{p-1}x^{p-1}=(\eta x)^{p-1}, \quad x>0,
\end{align*}
and hence,
\begin{align*}
V_x(x)^{\frac{1}{p-1}}\leqslant \eta x<kx+\ell, \quad \textrm{ if } x\in \left(0,\;\frac{\ell}{\eta -k} \right),
\end{align*}
thus \eqref{subsetofU} follows.
\end{proof}
\begin{coro}\label{xeinC}
If $\kappa> r>k$, then $x_e\in\mathcal{C}$. If $r>\eta$, then $x_e\in\mathcal{U}$.
\end{coro}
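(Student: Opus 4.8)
The plan is to read off both inclusions directly from the Proposition immediately preceding the corollary, which asserts $\left(\tfrac{\ell}{\kappa-k},+\infty\right)\subseteq\mathcal{C}$ via \eqref{subsetofC} and $\left(0,\tfrac{\ell}{\eta-k}\right)\subseteq\mathcal{U}$ via \eqref{subsetofU}. The only work is to compare the exception point $x_e=\tfrac{\ell}{r-k}$ defined in \eqref{xe} with the relevant endpoints $\tfrac{\ell}{\kappa-k}$ and $\tfrac{\ell}{\eta-k}$, using that $t\mapsto\tfrac{\ell}{t-k}$ is strictly decreasing on $(k,+\infty)$ since $\ell>0$.

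First I would treat the case $\kappa>r>k$. Here $r>k$, so $x_e$ is well defined and positive, and $0<r-k<\kappa-k$, whence $x_e=\tfrac{\ell}{r-k}>\tfrac{\ell}{\kappa-k}$. Thus $x_e$ belongs to the ray $\left(\tfrac{\ell}{\kappa-k},+\infty\right)$, which is contained in $\mathcal{C}$ by \eqref{subsetofC}; hence $x_e\in\mathcal{C}$.

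Next I would treat the case $r>\eta$. Recall from the definition of $\eta$ that $\eta>\kappa>k$, so $r>\eta>k$ guarantees that $x_e$ is well defined and positive and that $\eta-k>0$. From $r-k>\eta-k>0$ we obtain $x_e=\tfrac{\ell}{r-k}<\tfrac{\ell}{\eta-k}$, so $x_e\in\left(0,\tfrac{\ell}{\eta-k}\right)\subseteq\mathcal{U}$ by \eqref{subsetofU}; hence $x_e\in\mathcal{U}$.

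There is essentially no obstacle in this argument: the corollary is an immediate consequence of the two inclusions already established together with the monotonicity just noted. The only point requiring a word of care is that $x_e$ is only meaningful when $r>k$, but each of the two hypotheses ($\kappa>r>k$, and $r>\eta$ combined with $\eta>\kappa>k$) already forces $r>k$, so the comparison is legitimate in both cases.
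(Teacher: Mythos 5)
Your proof is correct and follows the same route as the paper: compare $x_e=\ell/(r-k)$ with the endpoints $\ell/(\kappa-k)$ and $\ell/(\eta-k)$ using the monotonicity of $t\mapsto \ell/(t-k)$, then invoke the inclusions \eqref{subsetofC} and \eqref{subsetofU}. Your added remark that each hypothesis forces $r>k$ (so $x_e$ is well defined) is a small but welcome point of care that the paper also notes in the second case.
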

\begin{proof}
If $\kappa> r>k$ , then
\begin{align*}
x_e=\frac{\ell}{r-k}>\frac{ \ell }{\kappa-k}.
\end{align*}
Similarly, notting $\eta>\kappa>k$, if $r>\eta$, then $r>k$, and
\begin{align*}
x_e=\frac{\ell}{r-k}<\frac{ \ell }{\eta-k}.
\end{align*}
The claim follows from the above result.
\end{proof}%
\par
Now we are ready to provide the detailed descriptions of the regions $\mathcal{U}$ and $\mathcal{C}$.
\begin{thm}\label{thm:regions}
If $\kappa\geqslant k+r$, then there exists a constant
\begin{align}\label{regionxstar}
 x^*\in \left [\frac{\ell}{\eta -k},\; \frac{\ell}{\kappa -k}\right]
\end{align}
such that
\begin{align}\label{regionsU}
 \mathcal{U}=(0,x^*),
\end{align}
and
\begin{align}\label{regionsC}
 \mathcal{C}=[x^*,+\infty).
\end{align}
\end{thm}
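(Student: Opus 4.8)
The plan is to show that the constrained region $\mathcal{C}$ is a half-line $[x^*,+\infty)$ by establishing that once $V_x(x)^{1/(p-1)} \geqslant kx+\ell$ holds it continues to hold for all larger $x$; equivalently, that the function $h(x) := V_x(x)^{1/(p-1)} - (kx+\ell)$ does not return to negative values after becoming nonnegative. We already know from \eqref{subsetofC} that $h > 0$ on $(\ell/(\kappa-k),+\infty)$ and from \eqref{subsetofU} that $h < 0$ on $(0, \ell/(\eta-k))$, so $\mathcal{C}$ is nonempty, $\mathcal{U}$ is nonempty, and any separating point $x^*$ must lie in the interval \eqref{regionxstar}. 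It remains to rule out oscillation. First I would note that the value function is $C^1$ away from $x_e$ (Theorem \ref{V:c1}) and strictly concave (Proposition \ref{propertiesV}(c)), so $V_x$ is continuous and strictly decreasing on $(0,+\infty)\setminus\{x_e\}$; hence $h$ is continuous there, and $V_x(x)^{1/(p-1)}$ is strictly increasing in $x$. Thus on any interval avoiding $x_e$, $h$ is the difference of a strictly increasing continuous function and the strictly increasing affine function $kx+\ell$, which alone does not give monotonicity of $h$, so one needs the equations.

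The key step is a comparison/ODE argument on the interior of $\mathcal{U}$. On $\mathcal{U}$ the value function solves \eqref{pdeU}, and by Proposition \ref{propertiesV}(a) we have $xV_x(x)\leqslant pV(x)$; substituting this into \eqref{pdeU} and using $\kappa\geqslant k+r$ I would derive a differential inequality showing that on any connected component of $\mathcal{U}$ the ratio $V_x(x)^{1/(p-1)}/x$ — or more directly $h(x)$ — is forced to move in a definite direction, so that once we have left $\mathcal{U}$ we cannot re-enter it from the right. Concretely, suppose for contradiction that $\mathcal{U}$ were not an interval of the form $(0,x^*)$; then there would exist $0 < a < b$ with $a \in \partial\mathcal{C}$, $b \in \partial\mathcal{U}$ and $(a,b)\subseteq\mathcal{U}$, so $h(a)=h(b)=0$ with $h<0$ on $(a,b)$. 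Because $V_x(x)^{1/(p-1)}$ is strictly increasing, $kx+\ell$ must grow strictly faster than $V_x(x)^{1/(p-1)}$ on average over $(a,b)$ near $a$ but slower near $b$; evaluating \eqref{pdeU} (and its differentiated form, legitimate by the interior regularity of solutions of the uniformly elliptic equation \eqref{pdeU} on $\mathcal{U}$) at interior points where $h$ has a local minimum, together with the bound from part (a), yields a contradiction with $\kappa \geqslant k+r$. One has to handle the possibility that $x_e$ lies in $(a,b)$: by Corollary \ref{xeinC}, if $\kappa > r > k$ then $x_e\in\mathcal{C}$ so $x_e \notin \mathcal{U}$, while if $r\leqslant k$ there is no exception point, and the case $r\geqslant\eta$ puts $x_e$ in $\mathcal{U}$ but then $x_e < \ell/(\eta-k) \leqslant x^*$ lies safely to the left of the region being analyzed; in all cases the relevant component of $\mathcal{U}$ near its right endpoint avoids $x_e$, so the $C^2$ interior regularity applies.

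Once $\mathcal{U}$ is shown to be an interval containing $(0,\ell/(\eta-k))$ and contained in $(0,\ell/(\kappa-k))$, set $x^* := \sup\mathcal{U}$; then \eqref{regionxstar} is immediate, $\mathcal{U}=(0,x^*)$ by definition, and $\mathcal{C}=[x^*,+\infty)$ follows since $\mathcal{C}$ is the complement and $x^*\in\mathcal{C}$ because $\mathcal{C}$ is closed (it is defined by the non-strict inequality $V_x(x)^{1/(p-1)}\geqslant kx+\ell$ and $V_x$ is continuous at $x^*$, as $x^*\neq x_e$). The main obstacle I anticipate is making the ``no re-entry'' step rigorous: one must extract a usable differential inequality for $h$ purely from \eqref{pdeU} and the a priori bound $xV_x\leqslant pV$, and then argue — most cleanly by examining the sign of $h'$ (or $h''$) at an interior extremum of $h$ — that a local minimum of $h$ with value $0$ inside $\mathcal{U}$ cannot coexist with $\kappa\geqslant k+r$. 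Getting the inequalities to point the right way, and confirming that the exception point never interferes, is where the real work lies; everything else is bookkeeping with the already-established monotonicity, concavity, and the explicit bounds of Proposition \ref{propertiesV}.
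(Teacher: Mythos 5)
Your skeleton matches the paper's: bracket $x^*$ between $\ell/(\eta-k)$ and $\ell/(\kappa-k)$ via \eqref{subsetofU} and \eqref{subsetofC}, note that $x_e\notin\mathcal{U}$ on the relevant range (under $\kappa\geqslant k+r$ one has $\kappa>r$, so either $r\leqslant k$ or $\kappa>r>k$ and Corollary \ref{xeinC} applies), and then rule out a second bounded component $(x_1,x_2)\subseteq\mathcal{U}$ with $V_x(x_i)^{1/(p-1)}=kx_i+\ell$ at both endpoints. Up to that point you are fine, and the final bookkeeping ($x^*=\sup\mathcal{U}$, closedness of $\mathcal{C}$) is also correct.

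The genuine gap is the ``no re-entry'' step itself, which you defer and for which your proposed mechanism does not visibly close. Evaluating \eqref{pdeU} at an interior extremum of $h(x)=V_x(x)^{1/(p-1)}-(kx+\ell)$ gives, after substituting $h'=0$ (i.e.\ $V_{xx}=k(p-1)V_x^{(p-2)/(p-1)}$) and $xV_x\leqslant pV$, an inequality whose coefficients involve $\beta/p-r$ and $\theta/k$ with no controlled signs; the hypothesis $\kappa\geqslant k+r$ does not enter in any evident way, and the second-order information $h''\geqslant0$ would require third derivatives of $V$ that you have not justified. The paper's actual argument is different in an essential way: it inverts the marginal utility, defining $\BX(c)$ by $V_x(\BX(c))=c^{p-1}$ on the putative component, which converts \eqref{pdeU} (after one differentiation) into the \emph{linear} second-order ODE $\mathcal{L}\BX=0$ of \eqref{pdex}. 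The endpoint conditions become $\BX(c_i)=(c_i-\ell)/k$, and a one-line computation shows $\mathcal{L}\bigl((c-\ell)/k\bigr)=(\kappa-k)(1-p)c/k-r(1-p)\ell/k\geqslant0$ precisely because $\kappa\geqslant k+r$ and $c>\ell$ there; the comparison principle then forces $\BX(c)\leqslant(c-\ell)/k$, i.e.\ $c\geqslant k\BX(c)+\ell$, contradicting $\BX(c)\in\mathcal{U}$. This change of variables and the affine supersolution are the core of the proof; without them (or a worked-out substitute), your argument is a plan rather than a proof.
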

\begin{proof}%
In order to prove the claim, we first derive the formula of solution in the unconstrained region $\mathcal{U}$,
although the problem does not admit a closed form solution on $(0,+\infty)$.
\par
Let $\BX(\cdot): (c_1,c_2)\mapsto \mathcal{U}$ be determined by
\begin{align}\label{v_x}
V_x(\BX(c))=c^{p-1},\quad c_1< c< c_2.
\end{align}
By Corollary \ref{xeinC}, $x_e\notin \mathcal{U}$, so $V_x(\cdot)$ is continuous and strictly decreasing on $ \mathcal{U}$.
Thus $\BX(\cdot)$ is well-defined and strictly increasing. It follows that
\begin{align}\label{v_xx}
V_{xx}(\BX(c))\BX'(c)=(p-1)c^{p-2},\quad c_1< c< c_2.
\end{align}
Applying \eqref{v_x} and \eqref{v_xx}, equation \eqref{pdeU} becomes
\begin{align*}
\beta V(\BX(c))-\theta c^p \BX'(c)-rc^{p-1}\BX(c)+ \frac{p-1}{p}c^p=0,\quad c_1< c< c_2.
\end{align*}
differentiating with respect to $c$,
\begin{multline*}
\beta V_x(\BX(c))\BX'(c)-\theta (c^p \BX''(c)+pc^{p-1}\BX'(c))-r(c^{p-1}\BX'(c)\\
+(p-1)c^{p-2}\BX(c))+ (p-1)c^{p-1}=0.
\end{multline*}
Applying \eqref{v_x} again and eliminating $c^{p-2}$,
\begin{align*}
\beta c\BX'(c)-\theta (c^2 \BX''(c)+pc \BX'(c))-r(c \BX'(c)+(p-1) \BX(c))+ (p-1)c =0.
\end{align*}
Now we obtain an ordinary differential equation for $\BX(\cdot)$:
\begin{align}
 \label{pdex} & \mathcal{L} \BX=0,\quad c_1< c< c_2,
\end{align}
where
\begin{align*}
\mathcal{L} \BX:=-\theta c^2 \BX''(c)+(\beta-\theta p-r)c \BX'(c) +r(1-p) \BX(c)-(1-p)c.
 \end{align*}
\par
Now we are ready to prove \eqref{regionsU}.
Per absurdum, suppose that besides the original interval $(0,x^*)$, the unconstrained
region $\mathcal{U}$ contains another bounded interval by \eqref{subsetofC}. That is, there exist $x_1$ and $x_2$ such that
\begin{align*}
 x^*<x_1<x_2<+\infty,\quad (x_1,x_2)\subseteq \mathcal{U},\quad V_x(x_1)^{\frac{1}{p-1}}=kx_1+\ell,\quad V_x(x_2)^{\frac{1}{p-1}}=kx_2+\ell,
 \end{align*}
 where the last two identities are from the continuity of $V_x(\cdot)$ and Corollary \ref{xeinC}.
 Let $c_1=\BX^{-1}(x_1)$ and $c_2=\BX^{-1}(x_2)$. Then recalling \eqref{v_x},
 \begin{align}
\label{X(c1)} \BX(c_1) &=x_1=\frac{V_x(x_1)^{\frac{1}{p-1}} -\ell}{k}=\frac{V_x(\BX(c_1))^{\frac{1}{p-1}} -\ell}{k}=\frac{c_1-\ell}{k}>0,\\
 \label{X(c2)} \BX(c_2) &=x_2=\frac{V_x(x_2)^{\frac{1}{p-1}} -\ell}{k}=\frac{V_x(\BX(c_2))^{\frac{1}{p-1}} -\ell}{k}=\frac{c_2-\ell}{k}>0.
 \end{align}
Thus
 \begin{align*}
 c>c_1>\ell,\quad\textrm{ if } c_1<c<c_2.
 \end{align*}
 We now confirm $c\mapsto\frac{c-\ell}k$ is a supersolution of ODE \eqref{pdex} with boundary conditions \eqref{X(c1)} and \eqref{X(c2)}.
In fact, $c\mapsto\frac{c-\ell}k$ satisfies boundary conditions \eqref{X(c1)} and \eqref{X(c2)}, thus we only need to confirm $\mathcal{L} \left(\frac{c-\ell}k\right)\geqslant 0$. Note
 \begin{multline*}
\mathcal{L} \left(\frac{c-\ell}{k}\right)=(\beta-\theta p-r)\frac{c}{k}+r(1-p)\frac{c-\ell}{k}-(1-p)c \\
 =\left(\frac{\beta-\theta p-r+r(1-p)}{1-p}-k \right)(1-p)\frac{c}{k} -r(1-p)\frac{\ell}{k}\\
 =\left(\frac{\kappa(1-p)}{1-p}-k \right)(1-p)\frac{c}{k} -r(1-p)\frac{\ell}{k}\\
 =\left(\kappa-k \right)(1-p)\frac{c}{k} -r(1-p)\frac{\ell}{k}>\left(\kappa-k \right)(1-p)\frac{\ell}{k} -r(1-p)\frac{\ell}{k}\geqslant 0,
\end{multline*}
where we used the assumption $\kappa\geqslant k+r$ in the last inequality. Thus we proved $\frac{c-\ell}k$ is a supersolution of ODE \eqref{pdex} with boundary conditions \eqref{X(c1)} and \eqref{X(c2)}. Therefore, $$\BX(c)\leqslant \frac{c-\ell}k,\quad c_1\leqslant c\leqslant c_2,$$
and consequently, $ V_x(\BX(c))^{\frac{1}{p-1}}=c\geqslant k\BX(c)+\ell $ that contradicts $\BX(c)\in\mathcal{U}$, $ c_1< c< c_2$. Thus we proved \eqref{regionsU}. By the definitions of $\mathcal{U}$ and $\mathcal{C}$, \eqref{regionsC} follows immediately.
\par
The claim \eqref{regionxstar} follows from \eqref{subsetofC} and \eqref{subsetofU}.
\end{proof}

\section{The Value Function: Continuity of the Second Order Derivative and the Optimal Strategy}
 \noindent
\begin{thm}\label{mainthm}
Suppose $\kappa\geqslant k+r$. If $k\geqslant r$, then $V_{xx}(\cdot)\in C(0,+\infty)$. If $k<r$, then $V_{xx}(\cdot)\in C\big((0,+\infty)\backslash \{x_e\}\big)$,
where $x_e$ defined in \eqref{xe} is the unique possibility of exception point.
\end{thm}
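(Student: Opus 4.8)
The plan is to establish $C^2$ regularity separately on the two open trading regions $\mathcal{U}$ and $\mathcal{C}$ identified in Theorem \ref{thm:regions}, and then to show that the second derivative matches across the interface point $x^*$ (and, if relevant, that it remains continuous at $x_e$ when $x_e$ lies in one of the open regions). By Theorem \ref{thm:regions} we have $\mathcal{U}=(0,x^*)$ and $\mathcal{C}=[x^*,+\infty)$. On each of these sets $V(\cdot)$ satisfies a genuine (non-degenerate, by the strict concavity from Proposition \ref{propertiesV}(c)) second-order ODE, namely \eqref{pdeU} on $\mathcal{U}$ and \eqref{pdeC} on $\mathcal{C}$. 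Since $V_x>0$ and $V_{xx}<0$ there, each equation can be solved algebraically for $V_{xx}$ as a continuous function of $x$, $V(x)$, and $V_x(x)$; as $V\in C^1$ away from $x_e$ by Theorem \ref{V:c1}, this already gives $V_{xx}\in C$ on $\mathcal{U}$ and on the interior of $\mathcal{C}$, and a bootstrap shows $V\in C^2$ there. The only places needing care are $x=x^*$ (the free boundary) and $x=x_e$.

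First I would treat the free boundary $x^*$. On $(0,x^*)$ the relation $V_x(x)^{1/(p-1)}<kx+\ell$ holds and on $[x^*,+\infty)$ it reverses, so by continuity of $V_x$ (using $x^*\neq x_e$ when $r>\eta$, or handling $x^*=x_e$ by the estimate $V_x(x_e-)\le(kx_e+\ell)^{p-1}$ from Theorem \ref{V:c1}) we get the matching condition $V_x(x^*)^{1/(p-1)}=kx^*+\ell$, i.e. $c(x^*)=kx^*+\ell$. The key observation is that at this value of $c$ the two Hamiltonian expressions $U(c)-cV_x$ appearing in \eqref{pdeU} and \eqref{pdeC} agree together with their derivatives in $c$ — indeed $c=(V_x)^{1/(p-1)}$ is precisely the unconstrained maximizer, so at $x^*$ the "constrained" term $(kx+\ell-rx)V_x-\frac1p(kx+\ell)^p$ and the "unconstrained" term $-rxV_x+(1-\frac1p)V_x^{p/(p-1)}$ coincide. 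Plugging the common value of $V$ and $V_x$ at $x^*$ into the two versions of the HJB equation, both force the same value of $\frac{V_x^2}{V_{xx}}$, hence the same $V_{xx}(x^*-)$ and $V_{xx}(x^*+)$; this yields continuity of $V_{xx}$ at $x^*$. I would phrase this as: $V_{xx}(x^*\pm)$ both equal $-\frac{\mu^2 V_x(x^*)^2}{2\sigma^2}\big(\beta V(x^*) - r x^* V_x(x^*) + (1-\frac1p)V_x(x^*)^{p/(p-1)}\big)^{-1}$, using that the bracket is nonzero (again from strict concavity / the a priori bounds in Proposition \ref{propertiesV}).

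Next I would deal with $x_e$ in the case $k\ge r$: here $x_e=\ell/(r-k)$ is not a positive real (or is $+\infty$), so there is genuinely no exception point, every positive $x$ lies in the interior of $\mathcal{U}$ or of $\mathcal{C}$ or equals $x^*$, and the previous two paragraphs give $V_{xx}\in C(0,+\infty)$ outright. In the case $k<r$, $x_e$ is a bona fide point; if $x_e\in\mathcal{U}$ (which by Corollary \ref{xeinC} happens when $r>\eta$) or $x_e\in\mathrm{int}\,\mathcal{C}$ (which happens when $\kappa>r>k$, by Corollary \ref{xeinC}, since then $x_e\in\mathcal{C}$ and $x_e>\ell/(\kappa-k)\ge x^*$ so it is interior), then the ODE argument applies in a full neighbourhood of $x_e$ and, combined with Theorem \ref{V:c1} giving $V\in C^1$ near $x_e$ in that situation, gives $V_{xx}$ continuous at $x_e$ — but this contradicts nothing, it simply means the "possible exception" at $x_e$ does not materialize for $V_{xx}$ either, which is fine because the theorem only claims $V_{xx}\in C((0,+\infty)\setminus\{x_e\})$. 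So the statement to be proved is really just: away from $x_e$, $V_{xx}$ is continuous — and the remaining content is the free-boundary matching at $x^*$.

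The main obstacle, and the step I would spend the most care on, is the matching at $x^*$: one must rule out that $V_{xx}$ jumps there, and the clean way is to verify that the value-matching and smooth-fit ($C^1$) conditions already in hand, plus the algebraic solvability of each ODE for $V_{xx}$, force $C^1$-fit to upgrade automatically to $C^2$-fit because both one-sided ODEs are the "same" equation at the fit point. A secondary subtlety is making sure the denominators $\frac{V_x^2}{V_{xx}}$ and the bracketed quantities never vanish, which I would dispatch using the strict concavity $V_{xx}<0$ from Proposition \ref{propertiesV}(c) together with the positivity bounds on $V$ and $V_x$ in Proposition \ref{propertiesV}(b),(d); and, if $x^*=x_e$ is possible under the hypotheses, checking that the one-sided value and derivative limits from Theorem \ref{V:c1} are consistent enough to run the same argument, or else arguing directly that $x^*\neq x_e$ under $\kappa\ge k+r$.
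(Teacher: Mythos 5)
Your strategy---solve each regional ODE algebraically for $V_{xx}$ and then match at the free boundary $x^*$ via the first-order tangency of the constrained and unconstrained Hamiltonians---is genuinely different from the paper's, but it has a gap at its center. The step ``each equation can be solved algebraically for $V_{xx}$ as a continuous function of $x,V,V_x$ \dots\ using that the bracket is nonzero (from strict concavity / the a priori bounds)'' assumes precisely what the theorem asserts. Since the equations give $\frac{\mu^2}{2\sigma^2}\frac{V_x^2}{V_{xx}}=-\big(\beta V-rxV_x+(1-\tfrac1p)V_x^{p/(p-1)}\big)$ on $\mathcal{U}$ and likewise on $\mathcal{C}$, saying the bracket is continuous and nonvanishing is the same as saying $V_{xx}$ is finite, nonzero and continuous---circular. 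And the bracket genuinely can vanish: at $x=x_e$ one has $kx_e+\ell-rx_e=0$, so the bracket in \eqref{pdeC} collapses to $\beta V(x_e)-\tfrac1p(kx_e+\ell)^p$, which is exactly the quantity that vanishes in the exceptional case of Theorem \ref{V:c1}. That is the mechanism by which $x_e$ is a possible singularity of $V_{xx}$ (blow-up of $V_{xx}$, equivalently $v_{yy}=0$ in the dual), so your assertion that ``the ODE argument applies in a full neighbourhood of $x_e$ and gives $V_{xx}$ continuous at $x_e$'' cannot stand as written. Away from $x_e$, the two-sided bounds in Proposition \ref{propertiesV}(b),(d) control each term of the bracket separately but do not keep their difference away from zero, and ``strict concavity'' gives only that $V_x$ is strictly decreasing, not a pointwise bound $V_{xx}\leqslant-\delta<0$.

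The paper's proof is built exactly to supply this missing ingredient: it passes to the dual $v(y)=\max_x(V(x)-xy)$, for which the HJB equation becomes the quasilinear ODE \eqref{pdev}, non-degenerate for $y>0$; standard ODE regularity then gives $v\in C^2(0,+\infty)$ essentially for free, and since $V_{xx}=-1/v_{yy}$ the whole theorem reduces to showing $v_{yy}>0$ off $y_e$. That positivity is obtained by a minimum-principle argument (Propositions \ref{v_xxon(x,infty)} and \ref{v_xxon(0,x)}): if $v_{yy}(y_0)=0$ at an interior minimum then also $v_{yyy}(y_0)=0$, and substituting into the differentiated equation forces $(k-r)v_y(y_0)=\ell$, i.e.\ $I(y_0)=x_e$, which is precisely how $x_e$ is identified as the unique possible exception and why it requires $r>k$. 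To salvage a primal argument you would need a substitute for this step, namely that the brackets are bounded away from zero on compact subsets of $(0,+\infty)\setminus\{x_e\}$; your matching computation at $x^*$ is correct and corresponds to the paper's verification that $v_{yy}(y^*)>0$, but by itself it does not carry the theorem. A secondary, also nontrivial, point you pass over with ``a bootstrap'' is upgrading the concave viscosity solution to a classical solution of \eqref{pdeU}/\eqref{pdeC} with one-sided second derivatives up to $x^*$---the equation is fully nonlinear in $V_{xx}$, and removing that nonlinearity is the other thing the dual transformation buys.
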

Our main idea to prove the above result is to consider the dual function of the value function $V(\cdot)$. Making dual transformation
\begin{align}\label{def:v}
 v(y):=\max_{x>0}(V(x)-xy),\quad y>0.
\end{align}
Then $v(\cdot)$ is a finite decreasing convex function on $(0,+\infty)$.
Since $V_x (\cdot)$ is strictly decreasing, we denote the inverse function of $V_x(x)=y$ by
\begin{align}\label{x=I(y)}
I(y)=x.
\end{align}
By the property (d) in Proposition \ref{propertiesV}, $I(\cdot)$ is decreasing and mapping $(0,+\infty)$ to itself.
From \eqref{def:v},
 \begin{align}\label{v(y)exp1}
v(y)=[V(x)-xV_x(x)]\Big|_{x=I(y)}=V(I(y))-yI(y).
 \end{align}
Differentiating with respect to $y$,
 \begin{align}\label{v_y(y)exp1}
v_y(y) &=V_x(I(y))I'(y)-yI'(y)-I(y)=-I(y),\\
\label{v_yy(y)exp1}
v_{yy}(y)&=-I'(y)=-\frac 1{V_{xx}(I(y))},
 \end{align}
Inserting \eqref{v_y(y)exp1} into \eqref{v(y)exp1},
\begin{align*}
V(I(y))=v(y)-yv_y(y).
 \end{align*}
Making the transformation \eqref{x=I(y)}, applying \eqref{v(y)exp1}, \eqref{v_y(y)exp1}, \eqref{v_yy(y)exp1}, and $V_x(x)=y$, HJB equation \eqref{pde1} becomes
\begin{align}\label{pdev}
\beta (v(y)-yv_y(y))-\frac{\mu^2}{2\sigma^2}y^2 v_{yy}(y)+yd(y)+ryv_y(y)-\frac 1pd^p(y)=0,\quad y>0,
 \end{align}
where
\begin{align*}
d(y):=\min\Big\{ y^{\frac 1{p-1}},\;\ell-kv_y(y) \Big\}.
 \end{align*}
Equation \eqref{pdev} is quasilinear ODE, which degenerate at $y=0$. It follows that
\begin{align*}
v(y)\in C^2(0,+\infty)\cap C^{\infty}((0,+\infty)\backslash\{y^*\}),
 \end{align*}
 where $y^*=V_x(x^*)$ and $x^*$ is defined in Theorem \ref{thm:regions}.
 \par
Theorem \ref{mainthm} will follow from the following two propositions.
\begin{prop} \label{v_xxon(x,infty)}
Suppose $\kappa\geqslant k+r$.
Let $x^*$ be defined as in Theorem \ref{thm:regions}. If $k\geqslant r$, then $V_{xx}(\cdot) \in C[x^*+\infty)$. If $k<r$, then $V_{xx}(\cdot)\in C\big([x^*+\infty)\backslash \{x_e\}\big)$, where $x_e$ defined in \eqref{xe} is the unique possibility of exception point.
\end{prop}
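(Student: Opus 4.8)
The plan is to work with the dual equation \eqref{pdev} on the range $y \in (0, y^*]$, which corresponds (via $x = I(y)$, $y = V_x(x)$) to the constrained region $\mathcal{C} = [x^*, +\infty)$. By Theorem \ref{thm:regions} we know $\mathcal{C} = [x^*, +\infty)$, so on this $y$-range the minimum defining $d(y)$ is attained by the second branch, i.e. $d(y) = \ell - k v_y(y) = \ell + k I(y) = k x + \ell$ with $x = I(y)$; this is exactly the statement $V_x(x)^{1/(p-1)} \geqslant kx + \ell$ for $x \geqslant x^*$. Substituting $d(y) = \ell - k v_y(y)$ into \eqref{pdev} turns it into a genuine quasilinear second-order ODE for $v$ on $(0, y^*)$ with smooth (indeed analytic, away from $y=0$) coefficients in $v_y$; standard ODE regularity then gives $v \in C^2(0, y^*] \cap C^{\infty}((0,y^*])$ (the degeneracy is only at $y = 0$). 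The point $y^*$ itself is a boundary point of this interval and the equation is non-degenerate there, so one gets one-sided $C^2$ up to $y^*$. Via \eqref{v_yy(y)exp1}, $v_{yy}(y) = -1/V_{xx}(I(y))$, and since $I(\cdot)$ is a $C^1$ decreasing bijection on the relevant ranges with $I'(y) = 1/V_{xx}(I(y)) = -v_{yy}(y) \neq 0$ (strict convexity of $v$, equivalently strict concavity of $V$ from Proposition \ref{propertiesV}(c)), we can transfer smoothness back: $V_{xx}(x) = -1/v_{yy}(V_x(x))$ is continuous wherever $v_{yy}$ is continuous and positive and $V_x$ is continuous.

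The only place where $V_x$ fails to be continuous on $[x^*, +\infty)$ is the exception point $x_e$ of Theorem \ref{V:c1}, and this is relevant precisely when $r > k$ (when $r \leqslant k$, $x_e$ does not exist and $V \in C^1(0,+\infty)$, so the transfer argument gives $V_{xx} \in C[x^*, +\infty)$ with no exception). So the remaining case is $k < r$. Here I would first check whether $x_e \geqslant x^*$ at all — if $x_e < x^*$ then $x_e \notin [x^*, +\infty)$ and again $V_{xx} \in C[x^*,+\infty)$; only when $x_e \in [x^*, +\infty)$ is there anything to prove, and then the claim is simply that $x_e$ is the unique possible discontinuity, which is immediate because on $[x^*,+\infty) \setminus \{x_e\}$ the function $V_x$ is continuous and strictly decreasing, $I$ is a $C^1$ diffeomorphism onto its image, and $v_{yy} \in C$ with $v_{yy} > 0$ there, so the composition formula $V_{xx} = -1/(v_{yy} \circ V_x)$ exhibits $V_{xx}$ as continuous.

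The main obstacle is the degeneracy of \eqref{pdev} at $y = 0$ and, more subtly, justifying that the ODE for $v$ on $(0, y^*)$ is uniformly non-degenerate on compact subsets so that classical interior regularity applies. Concretely: \eqref{pdev} can be solved for $v_{yy}$ as $v_{yy}(y) = \frac{2\sigma^2}{\mu^2 y^2}\big[\beta(v - y v_y) + y d(y) + r y v_y - \tfrac1p d^p(y)\big]$ with $d(y) = \ell - k v_y(y)$, and for $y$ bounded away from $0$ the right-hand side is a locally Lipschitz function of $(v, v_y)$; together with the a priori bounds on $V$, $V_x$ from Proposition \ref{propertiesV} (which translate into bounds on $v$, $v_y$), a bootstrap gives $v \in C^{\infty}$ on $(0, y^*]$. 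One should also verify that $v_{yy}$ stays strictly positive on $(0,y^*]$ — this follows from strict concavity of $V$ (Proposition \ref{propertiesV}(c)) via $v_{yy} = -I' = -1/V_{xx} > 0$ — so that division by $v_{yy}$ in the transfer step is legitimate. I expect the bookkeeping around the endpoint $y^*$ (getting the one-sided second derivative there, matching with the behaviour coming from $\mathcal{U}$ is handled separately in the companion proposition) to be routine given that $\mathcal{C}$ is already pinned down as $[x^*, +\infty)$.
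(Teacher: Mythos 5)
There is a genuine gap, and it sits at the heart of the proposition. You correctly reduce the problem to the dual ODE \eqref{pdev} on $(0,y^*]$ and note that transferring regularity back to $V$ requires $v_{yy}>0$; but your justification of that positivity --- ``this follows from strict concavity of $V$ via $v_{yy}=-1/V_{xx}>0$'' --- is circular. Strict concavity of $V$ (Proposition \ref{propertiesV}(c)) only gives that $V_x$ is strictly decreasing, hence that $I=(V_x)^{-1}$ is strictly decreasing and $v_{yy}=-I'\geqslant 0$; it does not exclude $I'(y_0)=0$ at some point, which is exactly the case where $V_{xx}$ blows up to $-\infty$ at $x_0=I(y_0)$ and fails to be (finitely) continuous. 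Since $v\in C^2$ is already known from the quasilinear structure of \eqref{pdev} (the paper records this before the proposition), the entire content of Proposition \ref{v_xxon(x,infty)} is precisely to rule out zeros of $v_{yy}$ on $(0,y^*]$ away from $y_e=V_x(x_e-)$; your proposal assumes this rather than proving it. Relatedly, you conflate the possible discontinuity of $V_x$ at $x_e$ (a $C^1$ issue, already settled in Theorem \ref{V:c1}) with the possible failure of $V_{xx}$ at $x_e$: even where $V$ is $C^1$, $V_{xx}$ could degenerate, and that is what must be excluded pointwise.

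The paper's argument for the missing step is short but essential: if $v_{yy}(y_0)=0$ for some $0<y_0<y^*$, then by convexity of $v$ this is a minimum of $v_{yy}$, so $v_{yyy}(y_0)=0$ as well; differentiating \eqref{pdev} (with $d(y)=\ell-kv_y(y)$ on this range) and evaluating at $y_0$ kills every term containing $v_{yy}$ or $v_{yyy}$ and leaves $(k-r)v_y(y_0)=\ell$, i.e.\ $(r-k)I(y_0)=\ell$, forcing $x_0=x_e$ and $r>k$. The endpoint $y^*$ is handled by the same identity with the one-sided inequality $v_{yyy}(y^*-)\leqslant 0$, yielding $(r-k)x^*\geqslant\ell$, which contradicts $x_e\in\mathcal{C}$ (Corollary \ref{xeinC}) when $r>k$ and is vacuously impossible when $k\geqslant r$. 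Without this differentiate-and-evaluate-at-the-minimum argument (or a substitute for it), your proof does not establish the claimed continuity of $V_{xx}$, nor does it explain why $x_e$ is the \emph{unique} possible exception.
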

\begin{proof}
By \eqref{v_yy(y)exp1}, to prove $V_{xx}(\cdot) \in C\big([x^*+\infty)\backslash \{x_e\}\big)$ is equivalent to prove $v_{yy}(y)>0$ for all $y\in (0,y^*]\backslash \{y_e\}$, where $y_e=V_x(x_e-)$.
\par
Suppose there exists a point $0<y_0< y^*$ such that $v_{yy}(y_0)=0$, which is the minimum value of $v_{yy}(\cdot)$ by the convexity of $v(\cdot)$. It follows that $v_{yyy}(y_0)=0$. Differentiating \eqref{pdev} with respect to $y$ yields
\begin{multline}
\beta ( -yv_{yy}(y))-\frac{\mu^2}{2\sigma^2}(2y v_{yy}(y)+y^2 v_{yyy}(y))+ \ell-kv_y(y)-kyv_{yy}(y)\\
+r(v_y(y)+yv_{yy}(y))+k( \ell-kv_y(y))^{\frac{1}{p-1}}v_{yy}(y)=0, \quad 0<y<y^*. \label{v_yy4smally}
 \end{multline}
Applying $v_{yy}(y_0)=0$ and $v_{yyy}(y_0)=0$, we get $(k-r)v_{y}(y_0)=\ell$ which is equivalent to $(r-k)x_0=\ell$ where $x_0=I(y_0)$.
Hence $x_0=x_e$ is the unique possibility of exception point which can only happen in the scenario $r>k$.
\par
It remains to show $v_{yy}(y^*)>0$. If $v_{yy}(y^*)=0$ which is the minimum value of $v_{yy}(\cdot)$. It follows that $v_{yyy}(y^*-)\leqslant 0$. By \eqref{v_yy4smally}, it follows $(k-r)v_{y}(y^*)\geqslant \ell$ which is impossible if $k\geqslant r$ because $v(\cdot)$ is decreasing. If $r>k$, then $(r-k) x^* \geqslant \ell$, $x^*\geqslant x_e$ which is also impossible because $x_e\in\mathcal{C}$ by Corollary \ref{xeinC}.
\end{proof}
\begin{prop}\label{v_xxon(0,x)}
Suppose $\kappa\geqslant k+r$.
Let $x^*$ be defined as in Theorem \ref{thm:regions}. Then $V_{xx}(\cdot) \in C(0, x^*]$.
\end{prop}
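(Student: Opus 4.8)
The plan is to work on the unconstrained region $(0,x^*)=\mathcal{U}$, where $d(y)=y^{1/(p-1)}$, so that the dual equation \eqref{pdev} reduces to the \emph{linear} ODE
\begin{align*}
\beta(v(y)-yv_y(y))-\frac{\mu^2}{2\sigma^2}y^2 v_{yy}(y)+ryv_y(y)+\Bigl(1-\tfrac1p\Bigr)y^{\frac{p}{p-1}}=0,\quad y>y^*.
\end{align*}
Since $v(\cdot)\in C^2(0,+\infty)$ already (from the quasilinear structure), the only thing to rule out is $v_{yy}(y_0)=0$ for some $y_0>y^*$; equivalently, by \eqref{v_yy(y)exp1}, $V_{xx}$ does not blow up on $(0,x^*]$. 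So I would argue \emph{per absurdum}: suppose $v_{yy}(y_0)=0$ for some $y_0>y^*$. Because $v(\cdot)$ is convex, $v_{yy}\geqslant 0$, so $y_0$ is a global minimizer of $v_{yy}$ on the region where this linear equation holds, and hence $v_{yyy}(y_0)=0$ there (if $y_0$ is interior to $(y^*,\infty)$; if $y_0=y^*$ one gets instead a one-sided sign $v_{yyy}(y^*+)\geqslant 0$, handled the same way with an inequality).

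The key computational step is to differentiate the linear equation above with respect to $y$, obtaining an expression for $v_{yyy}$ in terms of $v_{yy}$, $v_y$, $v$ and explicit powers of $y$; then differentiate once more to get $v_{yyyy}$. Evaluating at $y_0$ with $v_{yy}(y_0)=v_{yyy}(y_0)=0$ collapses almost everything and leaves a relation of the form $c\, y_0^{q}=$ (lower-order terms in $v_y(y_0),v(y_0)$), or — more usefully — forces $v_{yyyy}(y_0)$ to have a definite sign that is incompatible with $y_0$ being a minimum of $v_{yy}$ (at a minimum one needs $v_{yyyy}(y_0)\geqslant 0$). This is exactly the mechanism used in the proof of Proposition \ref{v_xxon(x,infty)}; the difference is that here the forcing term is the smooth power $(1-\tfrac1p)y^{p/(p-1)}$ rather than the constrained term, so the surviving coefficient is $\beta - p(\theta+r)$-type and one can read off its sign from $\kappa>0$ together with $\kappa\geqslant k+r$. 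The main obstacle I anticipate is bookkeeping: keeping track of the several power-of-$y$ terms through two differentiations and confirming that the residual inequality is strict (so that equality is genuinely excluded), and handling the boundary endpoint $y^*$ — where one only has one-sided derivatives from the $\mathcal{U}$ side — via the one-sided sign conditions rather than exact vanishing.

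Finally, I would patch the endpoints: on the closed interval $(0,x^*]$, continuity of $V_{xx}$ at interior points follows from $v_{yy}>0$ and $v\in C^2$ plus \eqref{v_yy(y)exp1}; at $x^*$ itself, $V_{xx}(x^*-)$ is finite and positive by the argument above, and matches $V_{xx}(x^*+)$ by Proposition \ref{v_xxon(x,infty)} evaluated at $x^*$ (recall $x^*\in\mathcal{C}$, and one checks the two-sided limits of $v_{yy}$ at $y^*$ agree since both are governed by value-matching and the strict positivity just established). Note there is no exception-point issue on $(0,x^*]$: by Corollary \ref{xeinC}, when $r>\eta$ we have $x_e\in\mathcal{U}$, but then one checks $x_e<x^*$ is excluded because $x^*\le \ell/(\kappa-k)<\ell/(\eta-k)$... actually $x_e$ could lie in $(0,x^*)$; however on $\mathcal{U}$ the relevant equation is the smooth linear one and $V_x$ is continuous and strictly decreasing there (Theorem \ref{V:c1} shows the only non-smoothness of $V_x$ is at $x_e$, and $x_e\in\mathcal U$ forces $V_x(x_e-)^{1/(p-1)}<kx_e+\ell$, i.e. $V_x(x_e-)>(kx_e+\ell)^{p-1}$, contradicting $V_x(x_e-)\le (kx_e+\ell)^{p-1}$ from Theorem \ref{V:c1}); hence $x_e\notin(0,x^*]$ and the whole interval is smooth.
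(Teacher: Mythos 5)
Your skeleton matches the paper's proof: pass to the dual equation on $y>y^*$ (where $d(y)=y^{1/(p-1)}$), suppose $v_{yy}(y_0)=0$ at some $y_0>y^*$, note that convexity makes $y_0$ a minimizer so $v_{yyy}(y_0)=0$, and differentiate the ODE once to extract a pointwise relation at $y_0$. Your handling of the endpoints is also fine: $v_{yy}(y^*)>0$ is already supplied by the proof of Proposition \ref{v_xxon(x,infty)}, and the exception point is a non-issue since $\kappa\geqslant k+r>r>k$ forces $x_e\in\mathcal{C}$ by Corollary \ref{xeinC}, hence $x_e>x^*$ (indeed $x^*\leqslant \ell/(\kappa-k)<\ell/(r-k)=x_e$); your longer detour through Theorem \ref{V:c1} is unnecessary.

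The genuine gap is the closing contradiction. One differentiation of
\begin{align*}
\beta(v-yv_y)-\frac{\mu^2}{2\sigma^2}y^2v_{yy}+ryv_y+\Bigl(1-\tfrac1p\Bigr)y^{\frac{p}{p-1}}=0
\end{align*}
evaluated at $y_0$ with $v_{yy}(y_0)=v_{yyy}(y_0)=0$ gives exactly $rv_y(y_0)=-y_0^{1/(p-1)}$, i.e. $V_x(x_0)^{1/(p-1)}=rx_0$ with $x_0=I(y_0)$. You correctly anticipate a relation of this shape, but the contradiction does \emph{not} come from the sign of a coefficient of the type $\beta-p(\theta+r)$, nor is a second differentiation and a $v_{yyyy}$ sign argument needed (and you give no computation showing that route would close). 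What kills the relation is the a priori gradient bound of Proposition \ref{propertiesV}(d): $V_x(x)\leqslant\kappa^{p-1}x^{p-1}$, hence $V_x(x_0)^{1/(p-1)}\geqslant\kappa x_0$ (the exponent $1/(p-1)<0$ reverses the inequality), which combined with $V_x(x_0)^{1/(p-1)}=rx_0$ forces $r\geqslant\kappa$, contradicting $\kappa\geqslant k+r>r$ since $k>0$. Without invoking that bound — or some substitute for it — your argument does not terminate; with it, no fourth derivative is required and the proof is complete.
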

\begin{proof}
It is proved that $v_{yy}(y^*)>0$ in the proof of Proposition \ref{v_xxon(x,infty)}.
Suppose there exists a point $y_0> y^*$ such that $v_{yy}(y_0)=0$, which is the minimum value of $v_{yy}(\cdot)$ by the convexity of $v(\cdot)$.
It follows that $v_{yyy}(y_0)=0$. Differentiating \eqref{pdev} with respect to $y$ yields
\begin{align*}
\beta ( -yv_{yy}(y))-\frac{\mu^2}{2\sigma^2}(2y v_{yy}(y)+y^2 v_{yyy}(y))+ r(v_y(y)+yv_{yy}(y))+y^{\frac{1}{p-1}}=0,\quad y>y^*.
 \end{align*}
Applying $v_{yy}(y_0)=0$ and $v_{yyy}(y_0)=0$, we get $rv_{y}(y_0)=-y_0^{\frac{1}{p-1}}$ which is equivalent to $V_x(x_0)^{\frac{1}{p-1}}=rx_0$ where $x_0=I(y_0)$. However, by the property (d) in Proposition \eqref{propertiesV}, we have $V_x(x_0)^{\frac{1}{p-1}}\geqslant \kappa x_0>rx_0$.
The proof is complete.
\end{proof}
\par
Before proving the global continuity of the first order derivative of the value function, we recall a result in convex analysis.
\begin{lemma}\label{dual}
Let $h(\cdot)$ be a finite concave function on $(0,+\infty)$. Define its convex dual
\begin{align*}
 \widehat{h}(y):=\max_{x>0}(h(x)-xy),\quad y>0.
\end{align*}
Let $y_0=\inf\{y>0: \widehat{h}(y)<+\infty\}$. Then the following two statements are equivalent:
\begin{enumerate}
	 \item $\widehat{h}(\cdot) $ is strictly convex on $(y_0,+\infty)$.
 \item $h(\cdot) $ is continuous differentiable on $(0,+\infty)$.
\end{enumerate}
\end{lemma}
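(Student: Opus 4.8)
The plan is to prove both implications by contraposition, via the classical dictionary between the non-smooth points of the concave function $h$ and the affine pieces of its conjugate $\widehat h$, phrased through one-sided derivatives and subgradients.

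First I would collect standard facts. Since $h$ is finite and concave on $(0,+\infty)$, at every $x>0$ the one-sided derivatives $h'(x-)\geqslant h'(x+)$ exist, the maps $x\mapsto h'(x\pm)$ are nonincreasing, and $h$ is differentiable at $x$ if and only if $h'(x-)=h'(x+)$; by monotonicity of $h'$ and Darboux's theorem, $h$ is differentiable at every point of $(0,+\infty)$ if and only if $h\in C^1(0,+\infty)$. On the dual side, $\widehat h$ is a nonincreasing convex function on $(y_0,+\infty)$ (a supremum of the affine maps $y\mapsto h(x)-xy$, each of slope $-x<0$), and a convex function fails to be strictly convex on an interval exactly when it is affine on some nondegenerate subinterval of it. Using that concavity and finiteness force $h$ to be bounded above near $0$, one also checks that $y_0=\lim_{x\to+\infty}h'(x+)$.

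The key step is the maximizer--subgradient correspondence: for $y>y_0$ and $x_0>0$, the point $x_0$ attains $\widehat h(y)=\max_{x>0}(h(x)-xy)$ if and only if $y\in[h'(x_0+),h'(x_0-)]$, in which case $\widehat h(y)=h(x_0)-x_0y$ and $-x_0$ is a subgradient of $\widehat h$ at $y$; conversely, if $-x_0$ is a subgradient of $\widehat h$ at $y$ with $x_0>0$, then $x_0$ attains the maximum (immediate from concavity of $x\mapsto h(x)-xy$, whose super-differential at $x$ is $[h'(x+)-y,h'(x-)-y]$, together with Fenchel biduality, which applies since $h$ is concave, continuous, and bounded above near $0$). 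From this I conclude: (i) if $h$ is not differentiable at some $x_0>0$, then $h'(x_0+)<h'(x_0-)$ and, since $h'(x_0+)\geqslant\lim_{x\to\infty}h'(x+)=y_0$, the nondegenerate open interval $(h'(x_0+),h'(x_0-))$ lies in $(y_0,+\infty)$ and every $y$ in it is maximized by $x_0$, so $\widehat h(y)=h(x_0)-x_0y$ is affine there and $\widehat h$ is not strictly convex on $(y_0,+\infty)$; (ii) conversely, if $\widehat h$ is affine on some $[y_1,y_2]\subset(y_0,+\infty)$, $y_1<y_2$, with slope $-x_0$ and $x_0>0$, then $-x_0$ is a subgradient of $\widehat h$ at every $y\in(y_1,y_2)$, so $x_0$ is a common maximizer, hence $y\in[h'(x_0+),h'(x_0-)]$ throughout $(y_1,y_2)$, which forces $h'(x_0+)<h'(x_0-)$, i.e. $h$ is not differentiable at $x_0$.

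The delicate point -- the main obstacle -- is the degenerate case $x_0=0$ in (ii): a priori the supremum defining $\widehat h(y)$ might be approached only as $x\to0+$, producing a constant (slope-zero) piece of $\widehat h$ strictly inside $(y_0,+\infty)$ that is \emph{not} caused by any kink of $h$. I would dispose of this in the setting at hand, where the relevant concave function is $V$: by Proposition \ref{propertiesV}(d), $V_x(x)\to+\infty$ as $x\to0+$, so for every $y>0$ the maximizer $I(y)$ is an interior point $>0$, every affine piece of $v=\widehat V$ has strictly negative slope $-I(y)$, and $x_0=0$ never occurs; hence affine pieces of $v$ correspond exactly to points where $V$ is not differentiable. (More generally, a constant piece of $\widehat h$ strictly inside $(y_0,+\infty)$ can occur only if $h'(0+)<+\infty$, so under the blow-up condition $h'(0+)=+\infty$ the argument goes through verbatim.) The remaining verifications -- monotonicity and convexity of $\widehat h$, the identity $y_0=\lim_{x\to\infty}h'(x+)$, the super-differential formula, and ``strictly convex $\Leftrightarrow$ no affine piece'' -- are routine.
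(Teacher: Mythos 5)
Your proposal is correct, and it is worth separating the two directions. For $1\Rightarrow 2$ your argument essentially coincides with the paper's: a kink of $h$ at $x_0$ forces $\widehat h(y)=h(x_0)-x_0y$ on the slope interval $[h'(x_0+),h'(x_0-)]$, destroying strict convexity. For $2\Rightarrow 1$ you take a genuinely different route, arguing by contraposition through the subgradient/maximizer dictionary (an affine piece of $\widehat h$ of slope $-x_0$ with $x_0>0$ makes $x_0$ a common maximizer for an interval of $y$'s, hence a kink of $h$ at $x_0$), whereas the paper argues directly: using $\widehat h(h'(x))=h(x)-h'(x)x$ it checks strict midpoint convexity between $a=h'(x_2)$ and $b=h'(x_1)$, noting that equality in both concavity inequalities would make $h$ affine on $[x_1,x_2]$. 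Yours is the standard convex-duality argument; the paper's is more elementary and self-contained. More importantly, you correctly isolate the degenerate case $x_0=0$, which is a real issue: if $h'(0+)<+\infty$ then $\widehat h$ has a constant piece on $(h'(0+),+\infty)$ that is not produced by any kink of $h$ (e.g.\ $h(x)=-(x-1)^2$ is smooth yet its dual is constant for $y\geqslant 2$), so the lemma as literally stated needs a blow-up condition at $0$. The paper's own proof carries the same hidden assumption --- it selects $x_1,x_2$ with $h'(x_1)=b$, $h'(x_2)=a$ for arbitrary $b>a>y_0$, which requires $h'$ to map onto $(y_0,+\infty)$, i.e.\ $h'(0+)=+\infty$. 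Your observation that this holds for the intended application because $V_x(x)\to+\infty$ as $x\to 0+$ by Proposition~\ref{propertiesV}(d) is exactly the right repair, and it is needed for the paper's version of the proof as well.
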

\begin{proof}
$''1 \Longrightarrow 2''$: Per absurdum, suppose $h(\cdot) $ is not differentiable at some $x_0>0$, then
\begin{align*}
h(x)- h(x_0)\leqslant y (x-x_0), \quad\forall\; x>0,
\end{align*}
for all $y\in [h_x(x_0+), h_x(x_0-)]$. Then it follows
\begin{align*}
 h(x)- y x\leqslant h(x_0)-y x_0 , \quad\forall\; x>0,
\end{align*}
and hence $\widehat{h}(y)= h(x_0)-y x_0$, $y\in [h_x(x_0+), h_x(x_0-)]$. This contradicts that $\widehat{h}(\cdot)$ is strictly convex.
Therefore, $h(\cdot) $ is differentiable. Because $h_x(\cdot) $ is increasing, by the Darboux's Theorem, $h_x(\cdot) $ is also continuous.
\par
$''2 \Longrightarrow 1''$:
Because $h(\cdot)$ is continuous differentiable on $(0,+\infty)$,
$$\widehat{h}(h'(x))= h(x)-h'(x)x,\quad x>0.$$
 For any $b>a>y_0$, we need to show $2\widehat{h}((a+b)/2)<\widehat{h}(a)+\widehat{h}(b).$
Let $0<x_1<x_2$ such that $b=h'(x_1)>h'(x_2)=a$.
Let $y$ satisfy $h'(y)=\frac{1}{2}(h'(x_1)+h'(x_2))=\frac{1}{2}(a+b)$, then $x_1<y<x_2$. It is sufficient to show
$$2\widehat{h}(h'(y))<\widehat{h}(h'(x_1))+\widehat{h}(h'(x_2)) ,$$
i.e.,
\begin{align}\label{inequality:h}
2h(y)-2h'(y)y<h(x_1)-h'(x_1)x_1+h(x_2)-h'(x_2)x_2.
\end{align}
Because $h(\cdot)$ is concave,
\begin{align*}
 h(y)-h(x_1)\leqslant (y-x_1)h'(x_1), \\
 h(y)-h(x_2)\leqslant (y-x_2)h'(x_2).
\end{align*}
If both of them are identities, then $h(\cdot)$ is linear on $[x_1,x_2]$, and $h'(x_1)=h'(x_2)$, a contradiction. So
\begin{multline*}
 2 h(y)-h(x_2)-h(x_1) < y (h'(x_2)+h'(x_1))-h'(x_1)x_1-h'(x_2)x_2\\
 = 2h'(y)y-h'(x_1)x_1-h'(x_2)x_2,
\end{multline*}
which is equivalent to the desired inequality \eqref{inequality:h}.
\end{proof}
\begin{coro}
Suppose $\kappa\geqslant k+r$.
The value function $V(\cdot)$ of problem \eqref{objective0} is in $C[0,+\infty)\cap C^1( 0,+\infty)$.
\end{coro}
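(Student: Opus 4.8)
The statement to be proved is the corollary asserting that $V \in C[0,+\infty)\cap C^1(0,+\infty)$ under the hypothesis $\kappa \geqslant k+r$. The plan is to combine Theorem~\ref{mainthm} (which we may now treat as established via Propositions~\ref{v_xxon(x,infty)} and~\ref{v_xxon(0,x)}) with the duality characterization in Lemma~\ref{dual}. Recall from Theorem~\ref{V:c1} that the only a priori obstruction to global $C^1$ regularity of $V$ is the exception point $x_e=\ell/(r-k)$, which exists only when $r>k$; when $k\geqslant r$ there is nothing to prove beyond what Theorem~\ref{V:c1} already gives. So the real content is disposing of the case $r>k$ at the single point $x_e$.

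\textbf{Step 1: Reduce to strict convexity of the dual.} Apply Lemma~\ref{dual} with $h=V$ (which is finite and concave on $(0,+\infty)$ by Proposition~\ref{propertiesbasicV}, and indeed strictly concave by Proposition~\ref{propertiesV}(c)). Its convex dual is exactly the function $v$ introduced in~\eqref{def:v}, and $y_0=\inf\{y>0:v(y)<+\infty\}=0$ since $V_x$ maps $(0,+\infty)$ onto $(0,+\infty)$ by Proposition~\ref{propertiesV}(d). By Lemma~\ref{dual}, it then suffices to prove that $v(\cdot)$ is \emph{strictly} convex on $(0,+\infty)$.

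\textbf{Step 2: Establish strict convexity of $v$ from the second-derivative results.} Strict convexity of $v$ on $(0,+\infty)$ would follow at once if we knew $v_{yy}(y)>0$ for every $y>0$. Combining Proposition~\ref{v_xxon(x,infty)} and Proposition~\ref{v_xxon(0,x)}, the identity $v_{yy}(y)=-1/V_{xx}(I(y))$ from~\eqref{v_yy(y)exp1} gives $v_{yy}(y)>0$ for all $y\in(0,+\infty)$ possibly except $y_e=V_x(x_e-)$ in the case $r>k$ --- the propositions rule out a zero of $v_{yy}$ at every other point. This leaves only the single possible point $y_e$. Here one uses that $v$ is $C^2$ on all of $(0,+\infty)$ and convex, so $v_{yy}\geqslant 0$ everywhere and $v_{yy}>0$ off $\{y_e\}$; a convex $C^2$ function whose second derivative vanishes at only one isolated point is still strictly convex (on any interval $[a,b]$, $v_y$ is nondecreasing and strictly increasing except possibly with a single stationary point, hence $v_y(a)<v_y(b)$). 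Thus $v$ is strictly convex on $(0,+\infty)$ regardless of whether $r\leqslant k$ or $r>k$.

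\textbf{Step 3: Conclude.} By Lemma~\ref{dual}, strict convexity of $v$ on $(0,+\infty)$ is equivalent to $V$ being continuously differentiable on $(0,+\infty)$. Together with $V\in C[0,+\infty)$ from Proposition~\ref{propertiesbasicV}, this is precisely the claim. The main obstacle is Step~2 at the exception point: one must argue that an \emph{isolated} degeneracy of $v_{yy}$ does not destroy strict convexity --- this is why it is essential that the propositions pin the bad set down to the single known point $x_e$ rather than merely "almost everywhere," and why the $C^2$ regularity of $v$ (from the quasilinear ODE~\eqref{pdev}) is invoked rather than just convexity.
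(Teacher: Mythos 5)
Your proposal is correct and follows essentially the same route as the paper: invoke Propositions~\ref{v_xxon(x,infty)} and~\ref{v_xxon(0,x)} to get $v_{yy}(y)>0$ for all $y$ except possibly $y_e=V_x(x_e-)$, conclude strict convexity of $v$, and then apply Lemma~\ref{dual} to obtain $V\in C^1(0,+\infty)$. Your Step~2, explaining why an isolated zero of $v_{yy}$ for a $C^2$ convex function cannot destroy strict convexity, is a detail the paper leaves implicit, and it is a worthwhile addition.
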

\begin{proof}
It is proved that $v_{yy}(y)>0$ if $y\neq V_x(x_e-)$ in the proofs of Proportions \ref{v_xxon(x,infty)} and \ref{v_xxon(0,x)}.
This implies that $v(\cdot)$ is a strictly convex function on $(0,+\infty)$. Consequently, $V(\cdot)$ is continuous differentiable on $( 0,+\infty)$ by Lemma \ref{dual}.
\end{proof}
\par
To give an explicit optimal consumption-investment strategy for problem \eqref{objective0}, we derive the formula of $\BX(\cdot)$ in the unconstrained region $\mathcal{U}$, although we cannot obtain a closed form solution on $(0,+\infty)$, but it is adequate for our purpose.
\begin{prop}
Suppose $\kappa\geqslant k+r$.
Let $\BX(\cdot)$ be defined as \eqref{v_x} and $x^*$ be defined as in Theorem \ref{thm:regions}, then
\begin{align}\label{x=xc1}
 \BX(c)=\frac{1}{\kappa}c- \frac{1}{\kappa} ((k-\kappa) x^*+ \ell ) \left(\frac{c}{kx^*+\ell}\right)^{\lambda}, \quad 0<c\leqslant kx^*+\ell.
\end{align}
\end{prop}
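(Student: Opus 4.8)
The plan is to solve the linear second-order ODE \eqref{pdex} on the interval $(c_1,c_2)=(0,kx^*+\ell)$ corresponding to the unconstrained region $\mathcal U=(0,x^*)$ (Theorem~\ref{thm:regions}), and then identify the two integration constants using the behaviour at the two endpoints $c\to 0+$ and $c=kx^*+\ell$. Recall from the proof of Theorem~\ref{thm:regions} that $\BX(\cdot)$ satisfies $\mathcal L\BX=0$, i.e.
\[
-\theta c^2\BX''(c)+(\beta-\theta p-r)c\,\BX'(c)+r(1-p)\BX(c)-(1-p)c=0,\quad 0<c<kx^*+\ell .
\]
This is an Euler (equidimensional) equation with a linear forcing term. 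A particular solution of the form $\BX_p(c)=Ac$ is found by substitution: plugging in gives $A(\beta-\theta p-r)+r(1-p)A-(1-p)=0$, i.e. $A(1-p)(\kappa-r)+r(1-p)A=(1-p)$, so $A\kappa=1$ and $\BX_p(c)=c/\kappa$. The homogeneous Euler equation $-\theta c^2 w''+(\beta-\theta p-r)c\,w'+r(1-p)w=0$ has indicial equation $-\theta m(m-1)+(\beta-\theta p-r)m+r(1-p)=0$; one checks $m=1$ is a root (consistent with $\kappa\neq r$ otherwise the particular solution must be modified), and the other root is some $\lambda$; since $m=1$ already appears in the particular part, the general solution is $\BX(c)=c/\kappa+B_1 c+B_2 c^{\lambda}$, which we rewrite as $\BX(c)=c/\kappa+ \tilde B_1 c+ \tilde B_2 c^{\lambda}$. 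So the general solution has the stated shape $\BX(c)=\frac1\kappa c + (\text{const})\,c+(\text{const})\,c^{\lambda}$.

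Next I would pin down the constants. The first boundary condition comes from the endpoint $c\to 0+$: as $x\to 0+$ we have $V_x(x)\to+\infty$ (by Proposition~\ref{propertiesV}(d), $V_x(x)\geqslant (\eta x)^{p-1}\to\infty$), hence $c=V_x(\BX(c))^{1/(p-1)}\to 0$, and more precisely the two-sided bound $(\eta x)^{p-1}\leqslant V_x(x)\leqslant(\kappa x)^{p-1}$ forces $\BX(c)/c$ to stay bounded between positive constants as $c\to 0$; since $0<\lambda<1$ would make the $c^\lambda$ term dominate and blow up $\BX(c)/c$, while $\lambda>1$ makes it vanish — either way one gets a clean condition. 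Actually the cleanest route is: the linear-in-$c$ part of $\BX$ must be exactly $c/\kappa$ near $0$ is not forced, but matching at $x^*$ together with the $C^1$-pasting gives two equations. I will use the two conditions at $c=kx^*+\ell$: first, $\BX(kx^*+\ell)=x^*$ (definition of $x^*$ as the boundary point, via \eqref{X(c1)}-type identity $\BX(c)=\frac{c-\ell}{k}$ holds at the boundary, giving $\BX(kx^*+\ell)=x^*$); second, the $C^1$ matching of $V_{xx}$ across $x^*$ established in Propositions~\ref{v_xxon(x,infty)} and \ref{v_xxon(0,x)}, which in the $\BX$ variable reads $\BX'(kx^*+\ell)=\frac1k$ (since on the constrained side $\BX(c)=\frac{c-\ell}{k}$ exactly, by $c=kx+\ell$ there). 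Imposing $\BX(kx^*+\ell)=x^*$ and $\BX'(kx^*+\ell)=\frac1k$ on $\BX(c)=\frac c\kappa+\tilde B_1 c+\tilde B_2 c^\lambda$ gives two linear equations for $\tilde B_1,\tilde B_2$; solving them yields exactly \eqref{x=xc1}, namely $\tilde B_1=0$ automatically and $\tilde B_2=-\frac1\kappa\big((k-\kappa)x^*+\ell\big)(kx^*+\ell)^{-\lambda}$.

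The main obstacle I anticipate is not the algebra but justifying that the $C^1$-matching of $V_{xx}$ at $x^*$ translates into the derivative condition $\BX'(kx^*+\ell)=1/k$, and that the correct pair of side conditions is $\{\BX=x^*,\ \BX'=1/k\}$ at the right endpoint rather than one condition at each endpoint. This requires knowing that on $\mathcal C=[x^*,\infty)$ the consumption is $c(x)=kx+\ell$ identically (so the inverse relation $\BX$ extends there as the explicit linear map $c\mapsto(c-\ell)/k$), and that $V\in C^2$ at $x^*$ (Theorem~\ref{mainthm}), so that $\BX'$ is continuous at $c=kx^*+\ell$; these are exactly the results proved just above, so the step is available. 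One should also double-check that the indicial root $\lambda$ is the one making $\BX$ well-behaved (the paper's $\lambda$); I would verify $\lambda>1$ (equivalently that $c^\lambda$ is subdominant near $0$), which is forced by $\kappa>r$ and $0<p<1$, so that the solution stays in $\mathcal U$ on all of $(0,x^*)$ and the boundary behaviour at $0$ is consistent. Once these structural facts are in hand, the formula \eqref{x=xc1} drops out of the two-by-two linear system.
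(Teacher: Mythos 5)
There are two genuine problems with your argument, and they interact so that your two\mbox{-}by\mbox{-}two system would not produce \eqref{x=xc1}. First, $m=1$ is \emph{not} a root of the indicial equation: substituting $w=c$ into the homogeneous equation gives $(\beta-\theta p-r)c+r(1-p)c=\big(\beta-p(\theta+r)\big)c=\kappa(1-p)c\neq 0$ since $\kappa>0$. The homogeneous solutions are $c^{\lambda}$ and $c^{\overline{\lambda}}$ with $f(\lambda)=f(\overline{\lambda})=0$, where $f(0)=r(p-1)<0$ and $f(1)=-\kappa(1-p)<0$ force $\lambda>1$ and $\overline{\lambda}<0$. Hence the general solution is $\BX(c)=\frac{1}{\kappa}c-Bc^{\lambda}-\overline{B}c^{\overline{\lambda}}$; your family $\frac{c}{\kappa}+\tilde B_1c+\tilde B_2c^{\lambda}$ contains non\mbox{-}solutions (any $\tilde B_1\neq 0$) and omits the $c^{\overline{\lambda}}$ branch that actually has to be ruled out. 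The correct way to kill that branch is the left endpoint: the bounds $(\eta x)^{p-1}\leqslant V_x(x)\leqslant(\kappa x)^{p-1}$ give $c/\eta\leqslant\BX(c)\leqslant c/\kappa$, so $\BX(0+)=0$, which is incompatible with $\overline{B}\neq 0$ since $\overline{\lambda}<0$. (You mention this two\mbox{-}sided bound but then discard it; it is in fact the needed condition.)

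Second, the right\mbox{-}endpoint derivative condition $\BX'(kx^*+\ell)=1/k$ is unjustified and, for the true solution, false in general. $\BX$ is the inverse of $x\mapsto V_x(x)^{\frac{1}{p-1}}$, and on $\mathcal C$ one has $V_x(x)^{\frac{1}{p-1}}\geqslant kx+\ell$ with equality only at the boundary, so $\BX$ does \emph{not} continue into $\mathcal C$ as the linear map $c\mapsto(c-\ell)/k$; you are conflating $\BX$ with the feedback consumption function $c(x)=kx+\ell$. From \eqref{v_xx}, $\BX'(c^*)=(p-1)(c^*)^{p-2}/V_{xx}(x^*)$, which is determined by $V_{xx}(x^*)$ and is not pinned to $1/k$ by the $C^2$ regularity of $V$. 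The only information actually used at the right endpoint is the zeroth\mbox{-}order matching $c^*=V_x(x^*)^{\frac{1}{p-1}}=kx^*+\ell$ (which needs $V\in C^1$, not $C^2$) together with $\BX(c^*)=x^*$; that single condition determines $B=\frac{1}{\kappa}\big((k-\kappa)x^*+\ell\big)(kx^*+\ell)^{-\lambda}$ once $\overline{B}=0$ has been obtained from the behaviour at $c\to 0+$.
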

\begin{proof}
Let $c^*=\BX^{-1}(x^*)$.
Because $V_x(\cdot)$ is in $ C^1( 0,+\infty)$ and \eqref{v_x},
\begin{align}\label{cstar}
c^*=V_x(\BX(c^*))^{\frac{1}{p-1}}=V_x(x^*)^{\frac{1}{p-1}}=kx^*+\ell.
\end{align}
The general solution of the corresponding homogeneous equation is $Bc^{\lambda} +\overline{B}c^{\overline{\lambda}}$, where $B$ and $\overline{B}$ are constants, $\lambda>\overline{\lambda}$ are two roots of function
 \begin{align*}
 f(\lambda)=\theta\lambda(\lambda-1)+(r-\beta+p\theta)\lambda+r(p-1).
 \end{align*}
 Note $f(1)=-\beta+p( \theta+r)<0$ and $f(+\infty)=+\infty$. It follows that $\lambda>1$ and $\overline{\lambda}<0$.
Note a particular solution to the inhomogeneous equation \eqref{pdex} is $\frac{1}{\kappa}c$.
 Thus the general solution to equation \eqref{pdex} is given by
\begin{align*}
 \BX(c)=\frac{1}{\kappa}c-Bc^{\lambda}-\overline{B}c^{\overline{\lambda}}, \quad 0<c\leqslant c^*.
\end{align*}
Because $\BX(0+)=0$ and $\overline{\lambda}<0$, we conclude that $\overline{B}=0$, and hence
 \begin{align*}
 \BX(c)=\frac{1}{\kappa}c-Bc^{\lambda}, \quad 0<c\leqslant c^*.
 \end{align*}
 By $\BX(c^*)=x^*$ and \eqref{cstar}, we obtain $B= \frac{1}{\kappa} ((k-\kappa) x^*+ \ell ) (kx^*+\ell)^{-\lambda}$ and
\begin{align*}
 \BX(c)=\frac{1}{\kappa}c- \frac{1}{\kappa} ((k-\kappa) x^*+ \ell ) \left(\frac{c}{kx^*+\ell}\right)^{\lambda}, \quad 0<c\leqslant c^*=kx^*+\ell.
\end{align*}
The proof is complete.
\end{proof}
\par
The main result of the paper is stated as follows.
\begin{thm}
Suppose $\kappa\geqslant k+r$. Let $x^*$ be defined as in Theorem \ref{thm:regions}.
The optimal consumption-investment strategy $(c^*(\cdot), \pi^*(\cdot))$ for problem \eqref{objective0} is given by a closed feedback form of wealth:
\begin{align*}
(c^*_t, \pi^*_t)=(c^*(X_t),\pi^*(X_t)),\quad t\geqslant 0,
\end{align*}
where
\begin{align*}
 c^*(x)=
 \begin{cases}
 \BX^{-1}(x),&\quad 0<x<x^*;\\
 kx+\ell,&\quad x\geqslant x^*,
 \end{cases}
 \quad \textrm{ and }\quad
 \pi^*(x)=\frac{ \mu }{\sigma^2(1-p)} x,\quad x>0,
\end{align*}
and $\BX^{-1}(\cdot)$ is the inverse function of $\BX(\cdot)$ defined in \eqref{x=xc1}.
\end{thm}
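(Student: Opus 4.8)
The plan is to verify that the feedback strategy produces an admissible wealth process and then apply a standard verification argument against the value function $V(\cdot)$, whose regularity has now been established. First I would note that by Theorem \ref{mainthm} and the preceding Corollary, $V(\cdot)\in C^1(0,+\infty)$ and $V_{xx}(\cdot)$ is continuous on $(0,+\infty)$ except possibly at $x_e$; moreover by Proposition \ref{propertiesV}(c) we have $V_{xx}<0$ on $(0,+\infty)$. Since $\mathcal{U}=(0,x^*)$ and $\mathcal{C}=[x^*,+\infty)$ by Theorem \ref{thm:regions}, the candidate consumption $c^*(x)=\min\{V_x(x)^{1/(p-1)},kx+\ell\}$ coincides with $\BX^{-1}(x)$ on $(0,x^*)$ — using that $\BX(\cdot)$ is exactly the inverse of $x\mapsto V_x(x)^{1/(p-1)}$ restricted to $\mathcal{U}$, as in \eqref{v_x} — and with $kx+\ell$ on $[x^*,+\infty)$. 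So $c^*(x)$ is the pointwise maximizer in \eqref{pde1}, and $\pi^*(x)=\frac{\mu}{\sigma^2(1-p)}x$ is the maximizer of the $\pi$-term once one substitutes $V_x(x)=\kappa^{p-1}x^{p-1}$-type bounds; more precisely $\pi^*(x)=-\frac{\mu V_x(x)}{\sigma^2 V_{xx}(x)}$, and the homogeneity $V(\lambda x)=\lambda^p V(x)$ (Proposition \ref{propertiesV}(a), via the degree-$p$ homogeneity in $(x,\ell)$ with the observation that along the optimal path the ratio $\ell/X_t$ behaves consistently) forces $-V_x/V_{xx}=x/(1-p)$, giving the stated linear form.

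Next I would establish well-posedness of the closed-loop SDE. Substituting $(c^*(x),\pi^*(x))$ into \eqref{wealth} gives
\begin{align*}
\dd X_t=\Big(rX_t+\tfrac{\mu^2}{\sigma^2(1-p)}X_t-c^*(X_t)\Big)\dd t+\tfrac{\mu}{\sigma(1-p)}X_t\,\dd W_t,\quad X_0=x.
\end{align*}
The drift coefficient is locally Lipschitz in $X_t$ on $(0,+\infty)$: on $[x^*,+\infty)$ the consumption term $kX_t+\ell$ is affine, and on $(0,x^*)$ it equals $\BX^{-1}(X_t)$ which is $C^1$ with bounded derivative on compact subintervals (since $\BX(\cdot)$ is strictly increasing and smooth by \eqref{x=xc1} with $\lambda>1$, so $\BX'(0+)=1/\kappa>0$). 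The diffusion coefficient is linear. Hence a unique strong solution exists up to an explosion or hitting time of $0$; I would check $X_t>0$ a.s. for all $t$ by a comparison argument — near $0$ the consumption $c^*(x)\le kx+\ell$ but also $c^*(x)=\BX^{-1}(x)\sim\kappa x$, so the drift near $0$ is dominated by $(r+\tfrac{\mu^2}{\sigma^2(1-p)}-\kappa)X_t$ plus lower order, which combined with the multiplicative noise keeps the process strictly positive, exactly as in Merton's model — and verify the integrability constraint \eqref{constaint:add1}, which follows since $X_t$ has moments of all orders (geometric-type growth, linear coefficients after bounding $c^*$) and $c^*(X_t)\le kX_t+\ell$.

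Finally I would run the verification theorem. Apply Itô's formula to $e^{-\beta t}V(X_t)$; because $V\in C^1$ globally and $C^2$ off the single point $x_e$ (which the process spends zero Lebesgue time at, so the Itô–Tanaka/occupation-time argument handles the missing second derivative there — the standard device when $V_{xx}$ has one removable discontinuity), and because $(c^*,\pi^*)$ achieve the suprema in HJB \eqref{pde1} so the drift term vanishes identically along the optimal trajectory, one gets
\begin{align*}
e^{-\beta T}\BE[V(X_T)]-V(x)=-\BE\Big[\int_0^T e^{-\beta t}U(c^*_t)\,\dd t\Big]+\BE\Big[\int_0^T e^{-\beta t}\sigma\pi^*_t V_x(X_t)\,\dd W_t\Big],
\end{align*}
the stochastic integral being a true martingale by the integrability bounds. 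For a general admissible $(c,\pi)$ the drift term is $\le 0$, yielding $\BE[\int_0^\infty e^{-\beta t}U(c_t)\dd t]\le V(x)$ after letting $T\to\infty$ and using $e^{-\beta T}\BE[V(X_T)]\to 0$ (from $V(x)\le\frac1p\kappa^{p-1}x^p$ and the moment bound); for the candidate strategy equality holds. The main obstacle is the transversality step $e^{-\beta T}\BE[V(X_T)]\to 0$ together with justifying the martingale property — one must show $\BE[X_T^p]$ grows slower than $e^{\beta T}$, which requires the parameter inequality, ultimately $\kappa>0$, i.e. $\beta>p(\theta+r)$; this is precisely where the standing assumptions $\kappa\ge k+r$ and $\kappa>0$ enter, and carefully propagating the $C^2$-off-one-point regularity through Itô's formula is the delicate part, though it is by now routine given Propositions \ref{v_xxon(x,infty)} and \ref{v_xxon(0,x)}.
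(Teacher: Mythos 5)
Your route is genuinely different from the paper's. The paper's proof of this theorem is a two-line identification: since $V$ is the (now regular) solution of the HJB equation \eqref{pde1} and $\mathcal{U}=(0,x^*)$, $\mathcal{C}=[x^*,+\infty)$ by Theorem \ref{thm:regions}, the optimal consumption feedback is read off as the pointwise maximizer $\min\{V_x(x)^{1/(p-1)},kx+\ell\}$, which on $\mathcal{U}$ equals $\BX^{-1}(x)$ by \eqref{v_x}; no closed-loop SDE analysis, It\^o computation, or transversality argument is written down, and the proof is entirely silent on $\pi^*$. Your proposal supplies exactly that missing verification layer (well-posedness and positivity of the closed-loop SDE, the occupation-time device at the single non-$C^2$ point $x_e$, the martingale and transversality estimates via $\BE[X_T^p]\leqslant x^p e^{p(r+\theta)T}$ and $\kappa>0$), and those parts are sound and more complete than what the paper records.

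The genuine gap is in your treatment of $\pi^*$. The pointwise maximizer of the $\pi$-term in \eqref{pde1} is $\pi^*(x)=-\mu V_x(x)/(\sigma^2 V_{xx}(x))$, and your verification argument, run correctly, certifies \emph{that} feedback, not automatically the linear one in the statement. Your attempt to bridge the two via ``$V(\lambda x)=\lambda^p V(x)$'' is invalid: Proposition \ref{propertiesV}(a) gives homogeneity of degree $p$ jointly in $(x,\ell)$, not in $x$ alone, and for fixed $\ell>0$ the value function is not a power of $x$ (the paper itself stresses there is no closed form on $(0,+\infty)$). Concretely, on $\mathcal{U}$ equations \eqref{v_x} and \eqref{v_xx} give $-V_x/V_{xx}=c\,\BX'(c)/(1-p)$ at $x=\BX(c)$, which equals $x/(1-p)=\BX(c)/(1-p)$ only if $\BX$ is linear in $c$, whereas \eqref{x=xc1} contains a genuine $c^{\lambda}$ term unless $x^*=\ell/(\kappa-k)$. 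So either you must prove the identity $-V_x(x)/V_{xx}(x)=x/(1-p)$ by some other means, or the feedback your argument actually verifies is $-\mu V_x/(\sigma^2V_{xx})$ and the stated linear form of $\pi^*$ needs a separate justification; the parenthetical about ``the ratio $\ell/X_t$ behaving consistently along the optimal path'' does not supply one, since $\ell$ is a fixed constant of the problem. (This is also the one point where the theorem's statement outruns the paper's own written argument, but as submitted your proposal cannot be accepted on this step.)
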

\begin{proof}
It is evident that
\begin{align*}
 c^*(x)=
 \begin{cases}
 V_x(x)^{\frac{1}{p-1}},&\quad 0<x<x^*;\\
 kx+\ell,&\quad x\geqslant x^*.
 \end{cases}
\end{align*}
We only need to show $V_x(x)^{\frac{1}{p-1}}=\BX^{-1}(x)$ which follows from \eqref{v_x}.
\end{proof}

\section{Concluding Remarks}
\noindent
As mentioned in Remark \ref{k=0}, the scenario $\kappa>k=0$ and $\ell> 0$ can be treated by our argument. In fact, in this scenario, both $\mathcal{U}$ and $\mathcal{C}$ are clearly intervals as $V_x(\cdot)$ is decreasing. Moreover, ODE \eqref{pdev} can be solved separately in the two regions. So we will not only have an explicit optimal consumption-investment strategy in a feedback form, but also have an explicit expression of the optimal value. We leave the details to the interested readers. As you may see, the problem is still open in the scenario $\kappa<k+r$. We will continuous work on this scenario and hope to fill the gap in the near future although the scenario is less likely to happen in real financial practice.



\end{document}